\documentclass[pmlr]{jmlr} 

\usepackage{booktabs} 
\usepackage{graphicx} 
\usepackage{url}      
\usepackage{enumitem}
\usepackage{amsmath}  
\usepackage{fancyvrb}
\usepackage{algorithm2e} 
\usepackage{tikz}
\usepackage{mdframed}
\usetikzlibrary{arrows.meta, positioning}

\title[Semiparametric Inference for Half-Trek Estimators in Linear SEMs]{Semiparametric Inference for Half-Trek Estimators in Linear Structural Equation Models}

\author{\Name{Leopold Mareis} \Email{leopold.mareis@tum.de}\\
  \addr Technical University of Munich\\
  Munich, Germany
  \AND
  \Name{Nils Sturma} \Email{nils.sturma@epfl.ch}\\
  \addr École Polytechnique Fédérale de Lausanne \\ 
  Lausanne, Switzerland
  \AND
  \Name{Mathias Drton} \Email{mathias.drton@tum.de}\\
  \addr Munich Center for Machine Learning \\ 
  Technical University of Munich\\
  Munich, Germany
}

\newcommand\indep{\protect\mathpalette{\protect\independenT}{\perp}}
\def\independenT#1#2{\mathrel{\rlap{$#1#2$}\mkern3mu{#1#2}}}
\newcommand{\E}[1]{\mathrm{E}\!\left[ #1 \right]}

\newmdenv[
  linewidth=0.4pt,
  linecolor=black,
  innerleftmargin=8pt,
  innerrightmargin=8pt,
  innertopmargin=6pt,
  innerbottommargin=6pt,
  skipabove=\topsep,
  skipbelow=\topsep,
]{restatebox}
\newcommand{\restatement}[2]{%
  \begin{restatebox}
    \textbf{#1:} #2
  \end{restatebox}%
}

\begin{document}

\maketitle

\begin{abstract}
Linear structural equation models on directed mixed graphs encode causal relationships among variables subject to latent confounding.
The half-trek criterion (HTC) provides a graphical sufficient condition for the structural coefficients to be rationally identifiable from the observable covariance matrix, and yields a corresponding closed-form rational estimator.
Despite this, the asymptotic distribution of the HTC estimator, and hence valid standard errors and confidence regions, have not been derived.
We derive the semiparametric influence function of this estimator for every HTC-identified node, in arbitrary directed mixed graphs, including cyclic ones.
The influence function combines the structural residual at the target node with the identification instruments, recursively corrected for uncertainty from earlier estimation stages.
The HTC estimator is asymptotically normal with variance computable in closed form, yielding confidence regions, marginal intervals, and Wald tests for individual structural coefficients. 
Applied to the Fulton Fish Market dataset, our theory delivers a complete inferential summary for the causal effect of supply on demand.

\end{abstract}

\begin{keywords}
Asymptotic inference, Instrumental variables, Half-trek criterion, Linear SEM, Semiparametric theory
\end{keywords}

\section{Introduction}\label{sec:intro}
Linear structural equation models (SEMs) on directed mixed graphs are a central tool in causal inference, valued for the direct interpretability of their structural coefficients and their ability to encode latent confounding through bidirected edges \citep{Bollen1989, Spirtes2000, drton2019}.
In such a model, a random vector $X = (X_v)_{v \in V}$ satisfies
\[
  X_v = \sum_{w \in \mathrm{pa}(v)} \beta_{wv} X_w + \varepsilon_v,
  \quad v \in V,
\]
according to a directed mixed graph $G = (V, D, B)$, where errors $\varepsilon_v$ and $\varepsilon_w$ may be dependent whenever $v \leftrightarrow w \in B$.
The fundamental question is whether $\beta$ is identifiable from the observed covariance matrix $\Sigma$; once it is, a natural second question arises: what is the asymptotic distribution of the resulting estimator, and how can it be used for inference?
This paper answers the second question for the half-trek criterion (HTC) estimator.

\paragraph{The half-trek criterion.}
The classical approach to identification under confounding is the instrumental variable (IV) estimator of \citet{Wright1928} and \citet{Bowden1984}.
In the graph $X_1 \to X_2 \to X_3$, $X_2 \leftrightarrow X_3$, the coefficient $\beta_{23}$ is unidentified by regression but recovered by $\beta_{23} = \E{X_1 X_3} / \E{X_1 X_2}$, because $X_1$ is correlated with $X_2$ yet uncorrelated with $\varepsilon_3$.
A sequence of graphical criteria for linear SEMs extended this idea to progressively richer confounding structures \citep{Brito2006, Tian2009, Foygel2012, Chen2016, Chen2017, Kumor2019, Barber2022}, each expressing identified parameters as rational functions of the covariance matrix $\Sigma$.
Among these, the HTC of \citet{Foygel2012} is the first polynomial-time constructive criterion applicable to arbitrary directed mixed graphs, including cyclic ones, without restrictions on the error-covariance structure: 
Whenever it succeeds at node $v$, it locates a \emph{witness set} $Y_v$ and identifies $\beta_v = (\beta_{wv})_{w \in \mathrm{pa}(v)}$, the structural coefficients of edges into $v$, by solving the IV system $A_v \beta_v = b_v$, where both $A_v$ and $b_v$ are computable from $\Sigma$. In particular, $A_v = \mathrm{E}[Z_{Y_v} X_{\mathrm{pa}(v)}^\top]$ is an instrument-relevance matrix, 
where, 
for each  $y \in Y_v$, the instrument $Z_y$ is either equal to the raw variable $X_y$,  
or equal to its structural residual $\varepsilon_y = X_y - \sum_{q \in \text{pa}(y)} \beta_{qy} X_q$ when $\beta_y$ has been identified at an earlier stage. The role of $y$ is determined by a graphical reachability condition on $G$ formalised in Section~\ref{sec:htc_estimation}.
This iterative construction gives an \emph{HTC ordering} $\prec$. While the HTC is not necessary for identifiability, complete criteria via Gr\"{o}bner basis computations exist \citep{Garcia2010, Hollering2026}. 

\paragraph{The inferential gap.}
Despite yielding a closed-form rational estimator, the HTC provides no asymptotic distribution, no standard errors, no confidence regions, and no such theory has been available.
Answering the inferential question requires two advances absent from classical single-stage IV: a joint matrix delta-method expansion handling all parents of $v$ simultaneously, and recursive correction terms carried through $\prec$ that account for estimation uncertainty in earlier stages $\hat{\beta}_y$.
We therefore address the following question: given HTC identification of $\beta_v$, what is the asymptotic distribution of $\hat{\beta}_v$?

\paragraph{Semiparametric inference for structural coefficients.}
Semiparametric efficiency for IV estimation was characterized by \citet{CHAMBERLAIN1987305} and \citet{Newey1990}, and efficient estimators for general nonlinear causal effects are studied in \citet{Jung2021}.
For linear confounded graphs, \citet{Mareis2026} derive the efficient influence function of the front-door estimator.
In the confounding-free setting, \citet{WitteHenckelMaathuis2020} and \citet{HenckelPerkovicMaathuis2022} characterize variance-optimal adjustment sets for total effect estimation, and \citet{HenckelButtenschoenMaathuis2024} extend this to acyclic directed mixed models.
For which directed mixed graphs the semiparametric efficiency bound is attainable within the HTC framework remains open: in the graph $1 \to 3 \to 4$, $2 \to 3 \leftrightarrow 4$, the bound for $\beta_{34}$ is achieved by an estimator using both instruments simultaneously, a bound no single-witness HTC estimator attains.
We therefore establish asymptotic normality of the HTC estimator as a foundation for inference, working within the semiparametric framework of \citet{Vaart_1998} and \citet{Tsiatis_Semiparametric} and imposing only finite fourth moments and differentiability in quadratic mean at the truth.

\paragraph{Contributions.}
We establish three results.
\begin{enumerate}[label=(\roman*), itemsep=3pt, parsep=0pt, leftmargin=2em]
  \item \textbf{Influence function.} For any HTC-identified node v in an arbitrary directed mixed graph, including cyclic ones, the semiparametric influence function of $\hat{\beta}_v$ is $\phi_{\beta_v} = A_v^{-1} R_v$, where the vector $R_v$ collects the row residuals of the identification system with recursive corrections from earlier estimation stages (Theorem~\ref{thm:htc_if}).
  \item \textbf{Asymptotic normality.} The asymptotic variance
    $\mathcal{V}_v = A_v^{-1} \E{R_v R_v^\top} A_v^{-\top}$ is expressed via finitely many cross-moments computable by descent through $\prec$, and reduces to the standard 2SLS variance for the classical IV graph (Proposition~\ref{prop:recursive_variance}).
  \item \textbf{Inference.} Confidence regions, marginal intervals, and Wald tests are provided, implemented in \texttt{R} with output mirroring \texttt{summary.lm()} \citep{R_stats}. Applied to the Fulton Fish Market \citep{Graddy1995, Angrist2000}, the theory delivers a complete inferential summary for the causal effect of supply on demand in a simultaneous-equations model with correlated errors.
\end{enumerate}

\paragraph{Organisation.}
Section~\ref{sec:htc_estimation} introduces the statistical model and the HTC estimator.
Section~\ref{sec:if} derives $\phi_{\beta_v}$, establishes asymptotic normality, and provides the closed-form asymptotic variance $\mathcal{V}_v$.
Section~\ref{sec:numerical_experiments} reports simulation evidence for calibration of the estimator $\hat{\mathcal{V}}_v$ across Gaussian and non-Gaussian errors, and examines the effect of witness set choice on estimation variance.
Section~\ref{sec:inference} translates the limit theory into valid inference, presents the \texttt{R} implementation, and illustrates the complete workflow on the Fulton Fish Market dataset.
Regularity conditions are stated in Appendix~\ref{asm:regularity} and are in force throughout.

\section{HTC Identification and Estimation}\label{sec:htc_estimation}
\subsection{Directed Mixed Graphs and Half-Treks}
Let $G=(V,D,B)$ be a directed mixed graph on the finite vertex set $V$ with directed edge set $D\subseteq V^{2}$ and bidirected edge set $B\subseteq \binom{V}{2}$.  
We also denote the directed edges $(v,w) \in D$ by $v \rightarrow w$ and the bidirected edges $\{v,w\} \in B$ by $v \leftrightarrow w$. Moreover, we assume that neither the directed part nor the bidirected part contain self-loops, that is, $v \to v \not\in D$ and $v \leftrightarrow v \not\in B$ for all $v \in V$.
Define the \emph{parents} of node $v$ as $\mathrm{pa}(v)=\{w\in V:w\to v\in D\}$ and its \emph{siblings} as $\mathrm{sib}(v)=\{w\in V:w\leftrightarrow v\in B\}$. 
We extend the latter to sets $ W \subseteq V$ via $\mathrm{sib}(W) = \{v \in V \setminus W : v \leftrightarrow w \in B \text{ for some } w \in W\}$.
A \emph{half-trek} is a path of the form
\[
  v_0\to v_1\to\cdots\to v_\ell
  \qquad\text{or}\qquad
  v_0\leftrightarrow v_1\to\cdots\to v_\ell,
\]
with $\mathrm{Left}(\pi)=\{v_0\}$ and $\mathrm{Right}(\pi)=\{v_1,\ldots,v_\ell\}$.
The \emph{half-trek reachable set} of $v$ is
\[
  \mathrm{htr}(v)
  :=
  \bigl\{w\in V\setminus(\{v\}\cup\mathrm{sib}(v)):
         \text{there exists a half-trek from }v\text{ to }w\bigr\}.
\]
A system of half-treks has \emph{no sided intersection} if $\mathrm{Left}(\pi_i)\cap\mathrm{Left}(\pi_j)=\emptyset =\mathrm{Right}(\pi_i)\cap\mathrm{Right}(\pi_j)$ for all $i\neq j$.

\subsection{The Semiparametric Model}
The parameter matrix $\beta$ is the finite-dimensional parameter of interest, while the error distribution $\varepsilon$ is infinite-dimensional nuisance, making the model semiparametric.
We write $ L^p_0$ for the space of mean-zero $\mathbb{R}^{|V|}$-valued random vectors with finite $p$th moment, so $h\in L_0^p$ if $\E{h} = 0$ and $\E{\|h\|^p} < \infty$.
\begin{definition}[Admissible parameters and nuisance]
\label{def:parameter_space} 
Let $G=(V,D,B)$ be a directed mixed graph. 
The set of admissible parameter matrices is
\[\mathcal{B}=\{\beta\in\mathbb{R}^{|V|\times |V|}:\beta_{wv}=0\text{ whenever }w\to v\notin D,\text{ and } \det(I-\beta) \neq 0 \}.\]
For acyclic graphs, $\det(I-\beta)\neq 0$ holds automatically; for cyclic graphs it is a non-trivial constraint.
The nuisance space $\mathcal{E}$ collects random vectors satisfying the \emph{connected set Markov property}, also called the \emph{marginal independence model}:
\[
  \mathcal{E}
  =
    \left\{\varepsilon \in L^4_0 \, \middle| \, 
\varepsilon_W\perp\!\!\!\perp\varepsilon_{V\setminus(W\cup \, \mathrm{sib}(W))}
    \text{ for every }W\subseteq V\text{ connected in }(V,B)
\right\}.
\]
\end{definition}
\begin{definition}[Linear model]
\label{def:model}
The \emph{linear model} of a directed mixed graph $G$ is the set of probability distributions
$
  \mathcal{M}_G
  =
  \bigl\{ P_{(\beta,\varepsilon)}
         :\beta\in\mathcal{B},\;\varepsilon\in\mathcal{E}\bigr\},
$
where $P_{(\beta,\varepsilon)}$ denotes the distribution of a random vector $X$ solving the equation system $X=\beta^{\top}X+\varepsilon$, that is, $X=(I-\beta)^{-\top}\varepsilon$.
\end{definition}
The estimation target is the parameter $\beta\in\mathcal{B}$.
Based on i.i.d.~samples $(X^{(i)})_{i \in [n]}$ of an unknown distribution  $P_0 \in \mathcal{M}_G$, a \emph{regular asymptotically linear} (RAL) estimator $\hat\beta_n$ satisfies
\[
  \sqrt{n}(\hat\beta_n-\beta)
  =
  \frac{1}{\sqrt{n}}\sum_{i=1}^{n}\phi_\beta(X^{(i)})+o_{P_0}(1)
  \;\xrightarrow{d}\;
  \mathcal{N}\!\left(0,\,\E{\phi_\beta\phi_\beta^\top}\right)
\]
for some mean-zero, square-integrable \emph{influence function} $\phi_\beta$.
For any parametric path $\gamma\mapsto P_\gamma\in\mathcal{M}_G$ through $P_0$ that is differentiable in quadratic mean at $\gamma = 0$ (Assumption~\ref{asm:regularity}.\ref{asm:1}), with score $S\in L^2_0$ as in \ref{asm:regularity}.\ref{asm:1} and induced parameter curve $\beta_\gamma\in\mathcal{B}$, the influence function satisfies $\mathbb{E}[\phi_\beta S]=\frac{d}{d\gamma}\big|_{\gamma=0}\beta_\gamma$ for every such submodel \citep[§25.3]{Vaart_1998}. 
Differentiating any population identifying equation for $\beta$ along $\gamma\mapsto P_\gamma$ recovers $\phi_\beta$.
In Sections~\ref{sec:htc}--\ref{sec:variance} we derive 
the influence function of the HTC estimator of $\beta_v$.

\subsection{Identification and Instruments}\label{sec:htc}

\begin{definition}[Half-trek criterion]
\label{def:htc}
A set $Y\subset V$ satisfies the \emph{half-trek criterion} (HTC) for
$v\in V$ if
\begin{enumerate}[label=(\roman*), itemsep=3pt, parsep=0pt, leftmargin=2em]
\item $|Y|=|\mathrm{pa}(v)|$,
\item $Y\cap(\{v\}\cup\mathrm{sib}(v))=\emptyset$, and
\item there exists a system of half-treks from $Y$ to $\mathrm{pa}(v)$ with
      no sided intersections.
\end{enumerate}
A set $Y\subset V$ satisfying the HTC for $v$ is called a \emph{witness set} for $v$; an element $y\in Y$ is an \emph{internal witness} if $y\in\mathrm{htr}(v)$ and an \emph{external witness} if $y\notin\mathrm{htr}(v)$.
\end{definition}
\citet[Theorem~1]{Foygel2012} showed that the parameters $\beta_{W,\cdot}$ are \emph{HTC-identified} in $G$ for $W \subseteq V$ if a family $(Y_v:v\in W)$ and a total order $\prec$ on $W$ exist such that $Y_v$ satisfies the HTC for every $v$ and $w\prec v$ whenever $w\in Y_v\cap\mathrm{htr}(v)$. We call $\prec$ the \emph{HTC ordering}; nodes with $Y_v\cap\mathrm{htr}(v)=\emptyset$ are minimal in $\prec$ and serve as the base of the induction in Section~\ref{sec:if}.
To identify the parameter vector $\beta_v:=(\beta_{p_1v},\ldots,\beta_{p_kv})^{\top}$ of directed edges pointing into $ v $, a raw HTC instrument vector $Z_{Y_v} := (Z_{y_1}, \dots , Z_{y_k})^\top$ is component-wise constructed through 
\[
  Z_y
  :=
  \begin{cases}
    \varepsilon_y = X_y - \beta_y^{\top}X_{\mathrm{pa}(y)}, & y\in\mathrm{htr}(v),\\
    X_y, & y\notin\mathrm{htr}(v).
  \end{cases} 
\]
Each $\beta_y$ was identified before since $y \in Y_v \cap \mathrm{htr}(v)$. The \emph{HTC relevance matrix} $A_v:=\E{Z_{Y_v}X_{\mathrm{pa}(v)}^{\top}}$ coincides with the identification matrix studied by \citet{Foygel2012}:
\newcommand{\TextLemmaHTCRelevanceMatrix}{
The HTC relevance matrix $A_v$ depends on $(\beta, \varepsilon)$ only through
$(\beta, \Omega)$ where $\Omega := \mathrm{Cov}(\varepsilon)$ is in  $\mathrm{PD}(V)$, the cone of positive definite matrices.
It is generically invertible: the set of $(\beta, \Omega) \in \mathcal{B} \times
\mathrm{PD}(V)$ for which $A_v$ is not invertible has Lebesgue measure zero.
}
\begin{lemma}[HTC relevance matrix]
\label{lem:htc_matrix}
\TextLemmaHTCRelevanceMatrix
\end{lemma}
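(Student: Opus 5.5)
The lemma has two parts. The first is a direct second-moment computation. The second reduces to the genericity argument of \citet{Foygel2012}, via the observation that a nonzero polynomial vanishes only on a Lebesgue-null set.

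\emph{Dependence on $(\beta,\Omega)$ only.} Since $X=(I-\beta)^{-\top}\varepsilon$, we have $\Sigma:=\E{XX^\top}=(I-\beta)^{-\top}\Omega(I-\beta)^{-1}$, which depends only on $(\beta,\Omega)$. Fix $y\in Y_v$ and $p\in\mathrm{pa}(v)$.
\begin{itemize}
\item If $y\notin\mathrm{htr}(v)$, then $(A_v)_{yp}=\E{X_yX_p}=\Sigma_{yp}$.
\item If $y\in\mathrm{htr}(v)$, then $\varepsilon_y=((I-\beta)^\top X)_y$, so $(A_v)_{yp}=\E{\varepsilon_yX_p}=[(I-\beta)^\top\Sigma]_{yp}=[\Omega(I-\beta)^{-1}]_{yp}$.
\end{itemize}
In both cases the entry is a rational function of $(\beta,\Omega)$ whose denominator is a power of $\det(I-\beta)$. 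This settles the first claim and also shows that $\det A_v$ is a rational function $f(\beta,\Omega)=g(\beta,\Omega)/\det(I-\beta)^m$ with $g$ polynomial.

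\emph{Generic invertibility.} Because $\det(I-\beta)\neq0$ on $\mathcal{B}$, the singular set is contained in the zero set of the polynomial $g$. The parameter space $\mathcal{B}\times\mathrm{PD}(V)$ is a nonempty open subset of $\mathbb{R}^{|D|}\times\mathrm{Sym}$ restricted to the supported entries (those $\Omega_{ij}$ with $i=j$ or $i\leftrightarrow j$). If $g$ is not identically zero, its zero set has Lebesgue measure zero, and we are done.

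It therefore suffices to exhibit one parameter point with $\det A_v\neq0$. This is the main step, and I would follow the proof of \citet[Theorem~1]{Foygel2012}.
\begin{enumerate}
\item Expand the entries with the trek rule. $\Omega(I-\beta)^{-1}$ sums over half-treks from $y$ to $p$, and $\Sigma_{yp}$ sums over treks.
\item For the rows with $y\in\mathrm{htr}(v)$, this already gives half-trek expansions. For the rows with $y\notin\mathrm{htr}(v)$, Foygel et al.\ show that $\det A_v$ can be reduced to a half-trek-system determinant (their ``Lemma 2'' style argument on $[(I-\beta)^\top\Sigma]$ restricted to rows $Y$). Alternatively, one can note that replacing $X_y$ by $\varepsilon_y$ does not affect nonvanishing at a suitably chosen point.
\item Apply the Gessel--Viennot-type result (their Lemma 2): $\det[(I-\beta)^{\top}\Sigma]_{Y,\mathrm{pa}(v)}$ is a sum over half-trek systems from $Y$ to $\mathrm{pa}(v)$ without sided intersection. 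The monomials from distinct systems do not cancel.
\item Condition (iii) of Definition~\ref{def:htc} provides at least one such system. Hence $g\not\equiv0$.
\end{enumerate}
Concretely, one can also construct a witness point by hand. Set $\beta$ nonzero only on the edges used by the half-trek system, set $\Omega=I$ plus small entries on the bidirected edges used, and scale so that $A_v$ is a small perturbation of a permutation-weighted nonsingular matrix.

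The obstacle I expect is the mixed rows, namely external witnesses $y\notin\mathrm{htr}(v)$ whose entries are full treks rather than half-treks. To handle them I would rely on Foygel et al.'s observation that when $y\notin\mathrm{htr}(v)$, the entries $\Sigma_{yp}$ and $[(I-\beta)^\top\Sigma]_{yp}$ differ by terms that vanish at the chosen point, or that they yield the same determinant up to a nonzero factor. If that comparison fails, I would instead verify nonvanishing directly at the hand-constructed sparse point.
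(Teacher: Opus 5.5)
Your proposal follows the paper's proof. Both write $\Sigma=(I-\beta)^{-\top}\Omega(I-\beta)^{-1}$, express the internal rows as $[(I-\beta)^\top\Sigma]_{y,\mathrm{pa}(v)}$ and the external rows as $\Sigma_{y,\mathrm{pa}(v)}$ to get rationality in $(\beta,\Omega)$, then take the fact that $\det A_v$ is not identically zero from Foygel et al.; the paper cites their Lemma~2 for the mixed internal/external matrix, so your worry about the external rows is covered by that citation, and your explicit clearing of the $\det(I-\beta)^m$ denominator is a more careful version of the paper's ``proper algebraic subset'' step.
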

All proofs are presented in Appendix \ref{appendix_proofs}. 
Lemma~\ref{lem:htc_matrix} shows invertibility holds generically; we assume it holds at the true parameter throughout (Assumption~\ref{asm:regularity}.\ref{asm:2}).
Setting $b_v:=\E{Z_{Y_v}X_v}$, 
the rotated instruments $A_v^{-1}Z_{Y_v}$ satisfy the standard IV orthogonality conditions:
\[
  \E{A_v^{-1}Z_{Y_v}X_{\mathrm{pa}(v)}^\top} = I_{|\mathrm{pa}(v)|},
  \qquad
  \E{A_v^{-1}Z_{Y_v}X_v} = A_v^{-1}b_v.
\]
The first identity shows that the $j$th row $e_j^\top A_v^{-1}Z_{Y_v}$ is uncorrelated with all parents except $p_j$; the second recovers $\beta_v=A_v^{-1}b_v$ as a population moment via the HTC orthogonality $\E{Z_y\varepsilon_v}=0$, the central equation 
Section~\ref{sec:if} differentiates to yield the influence function $\phi_{\beta_v}$.

\section{Influence Function and Asymptotic Variance}\label{sec:if}
Write $\hat\beta_{\prec v} := (\hat\beta_y)_{y \prec v}$ for the estimates of nodes preceding $v$. 
The empirical HTC estimator $\hat\beta_v := \hat{A}_v^{-1}\hat{b}_v$ is a Z-estimator: at each $v$ in HTC order it solves
\[
  n^{-1}\sum_{i \in [n]}\hat Z_{Y_v}(X^{(i)},\hat\beta_{\prec v})\, \left( X_v^{(i)} - \hat\beta_v^{\top}X_{\mathrm{pa}(v)}^{(i)} \right) = 0,
\]
with $\hat Z_y(X, \hat\beta_{\prec v})=X_y-\hat\beta_y^{\top}X_{\mathrm{pa}(y)}$ for $y\in\mathrm{htr}(v)$ and $\hat Z_y = X_y$ otherwise.
The identification equation $A_v\beta_v=b_v$ has rows indexed by witnesses $y\in Y_v$; the influence function $\phi_{\beta_v}=\phi_{A_v^{-1}b_v}$ follows by the delta method, which requires the pathwise derivatives $\phi_{M_y(t)}$ of the individual moments $M_y(t):=\E{Z_yX_t}$.
For external witnesses, the instrument $Z_y=X_y$ is fixed and $\phi_{M_y(t)}$ is the standard covariance influence function.
For internal witnesses, the residual instrument $Z_y=X_y-\beta_y^{\top}X_{\mathrm{pa}(y)}$ depends on $\beta_y$, so the pathwise derivative carries a correction from the earlier influence functions $\phi_{\beta_{qy}}$.
Lemma~\ref{lem:row_if} collects both cases; Lemma~\ref{lem:explicit_row} combines them into the row contributions $R_{y,v}$ entering Theorem~\ref{thm:htc_if} for $\phi_{\beta_v}=A_v^{-1}R_v$.

\newcommand{\TextRowMomentDerivative}{
For $M_y(t):=\E{Z_yX_t}$ with $y\in Y_v$, $t\in V$, the influence function along any parametric submodel is
\[
  \phi_{M_y(t)}
  =
  \begin{cases}
    X_yX_t - \Sigma_{yt}, & y\notin\mathrm{htr}(v),\\[4pt]
    \varepsilon_yX_t - \E{\varepsilon_yX_t}
    -\displaystyle\sum_{q\in\mathrm{pa}(y)}\Sigma_{qt}\,\phi_{\beta_{qy}},
    & y\in\mathrm{htr}(v).
  \end{cases}
\]}
\begin{lemma}[Row moment derivative]
\label{lem:row_if}
\TextRowMomentDerivative
\end{lemma}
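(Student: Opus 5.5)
The plan is to compute the pathwise derivative of $\gamma\mapsto M_y(t)$ along an arbitrary submodel $\gamma\mapsto P_\gamma\in\mathcal{M}_G$ that is differentiable in quadratic mean with score $S$. We then identify this derivative as $\E{\phi S}$ for the claimed $\phi$, which is mean zero and square integrable. Throughout we rely on the finite fourth moments in $\mathcal{E}$, together with Assumption~\ref{asm:regularity}, to justify differentiating under the integral. Concretely, for a function $f$ with $\E{f^2}<\infty$ (and locally uniformly so along the path), the standard quadratic-mean argument gives
\[
  \frac{d}{d\gamma}\Big|_{\gamma=0}\mathrm{E}_\gamma[f(X)] = \E{(f(X)-\E{f(X)})S}.
\]
Since products $X_yX_t$ have finite second moment under $L^4_0$, this applies to them.

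\textbf{External witnesses.} For $y\notin\mathrm{htr}(v)$ we have $Z_y=X_y$, so $M_y(t)=\mathrm{E}_\gamma[X_yX_t]=\Sigma_{yt}(\gamma)$ is the expectation of a fixed function. The display above immediately gives $\phi_{M_y(t)}=X_yX_t-\Sigma_{yt}$, the covariance influence function.

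\textbf{Internal witnesses.} For $y\in\mathrm{htr}(v)$ the instrument depends on the path through $\beta_y$. We write
\[
  M_y(t)(\gamma)=\mathrm{E}_\gamma\bigl[(X_y-\beta_{y,\gamma}^\top X_{\mathrm{pa}(y)})X_t\bigr]
  =\Sigma_{yt}(\gamma)-\sum_{q\in\mathrm{pa}(y)}\beta_{qy,\gamma}\Sigma_{qt}(\gamma),
\]
which is a smooth function of finitely many covariance entries and of $\beta_{y,\gamma}$. The product rule at $\gamma=0$ gives two contributions:
\[
  \E{(X_yX_t-\Sigma_{yt})S}-\sum_q\beta_{qy}\E{(X_qX_t-\Sigma_{qt})S} \quad\text{and}\quad -\sum_q\Sigma_{qt}\,\tfrac{d}{d\gamma}\beta_{qy,\gamma}.
\]
The first contribution collapses to $\E{(\varepsilon_yX_t-\E{\varepsilon_yX_t})S}$ by linearity. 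For the second, $y\prec v$ in the HTC ordering, so $\beta_y$ is HTC-identified at an earlier stage and is a differentiable functional along the path. By the defining property of influence functions recalled before Definition~\ref{def:htc}, $\frac{d}{d\gamma}\beta_{qy,\gamma}=\E{\phi_{\beta_{qy}}S}$. Collecting terms gives $\E{\phi_{M_y(t)}S}$ with the claimed expression, and this expression has mean zero because each $\phi_{\beta_{qy}}$ does.

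\textbf{Main obstacle.} The delicate step is legitimising the use of $\phi_{\beta_{qy}}$, i.e.\ knowing that earlier-stage influence functions exist and are square integrable. I would set the lemma inside an induction over $\prec$. At minimal nodes only the external case occurs. At a general node, the lemma uses the already established $\phi_{\beta_y}$ for $y\prec v$; this is guaranteed by Theorem~\ref{thm:htc_if} applied at earlier stages, where the invertibility of $A_y$ from Assumption~\ref{asm:regularity}.\ref{asm:2} makes the delta method applicable. Within this lemma it therefore suffices to state that $\beta_{y,\gamma}$ is pathwise differentiable with influence function $\phi_{\beta_y}$. The only other technical point is that the parent set $\mathrm{pa}(y)$ and the witness roles are fixed along the path, which holds because the graph is fixed.
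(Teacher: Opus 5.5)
Your proposal is correct and follows essentially the same route as the paper: an induction over $\prec$, the covariance influence function $X_aX_b-\Sigma_{ab}$ for external witnesses, and the product rule applied to $\Sigma_{yt}-\sum_{q}\beta_{qy}\Sigma_{qt}$, with the earlier-stage $\phi_{\beta_{qy}}$ supplied by the induction hypothesis. You also note that differentiating $\mathrm{E}_\gamma[X_aX_b]$ requires locally uniform square integrability along the path, a condition the paper's proof uses without stating it.
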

The row contribution $R_{y,v}$ is the influence function of the $y$-th identification equation residual $e_y^\top (b_v - A_v \beta_v) = M_y(v)-\sum_{p\in\mathrm{pa}(v)}\beta_{pv}M_y(p)$. It is obtained by applying Lemma~\ref{lem:row_if} to $M_y(v)$ and each $M_y(p)$ and combining linearly.

\newcommand{\TexRowContribution}{
For witness $y\in Y_v$ and target $v$, the influence function of the $y$-th row of the identification equation
\[
  R_{y,v}:=\phi_{M_y(v)}-\sum_{p\in\mathrm{pa}(v)}\beta_{pv}\,\phi_{M_y(p)},
\]
 satisfies
\[
  R_{y,v}
  =
  \begin{cases}
    X_y\varepsilon_v, & y\notin\mathrm{htr}(v),\\[6pt]
    \varepsilon_y\varepsilon_v
    -\displaystyle\sum_{q\in\mathrm{pa}(y)}\E{X_q\varepsilon_v}\,\phi_{\beta_{qy}},
    & y\in\mathrm{htr}(v).
  \end{cases}
\]
Write $R_v:=(R_{y_1,v},\ldots,R_{y_k,v})^{\top}$.}
\begin{lemma}[Row contribution]
\label{lem:explicit_row}
\TexRowContribution
\end{lemma}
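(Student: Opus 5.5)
The plan is to substitute the two cases of Lemma~\ref{lem:row_if} into the definition $R_{y,v}=\phi_{M_y(v)}-\sum_{p\in\mathrm{pa}(v)}\beta_{pv}\phi_{M_y(p)}$ and collapse each case using the structural equation $\varepsilon_v = X_v-\sum_{p\in\mathrm{pa}(v)}\beta_{pv}X_p$ together with the HTC orthogonality $\E{Z_y\varepsilon_v}=0$ for $y\in Y_v$. Linearity of the influence-function map in $t$ reduces everything to bookkeeping over the index $t\in\{v\}\cup\mathrm{pa}(v)$.

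\emph{External witness $y\notin\mathrm{htr}(v)$.} Here $\phi_{M_y(t)}=X_yX_t-\Sigma_{yt}$. Combining linearly gives
\[
R_{y,v}=X_y\Bigl(X_v-\sum_{p}\beta_{pv}X_p\Bigr)-\Bigl(\Sigma_{yv}-\sum_p\beta_{pv}\Sigma_{yp}\Bigr)=X_y\varepsilon_v-\E{X_y\varepsilon_v}.
\]
The centering term vanishes because $Z_y=X_y$ and $\E{Z_y\varepsilon_v}=0$ is exactly the $y$-th row of $A_v\beta_v=b_v$. Equivalently, $y\notin\mathrm{htr}(v)\cup\{v\}\cup\mathrm{sib}(v)$ implies $X_y$ is a linear function of errors independent of $\varepsilon_v$ under the connected set Markov property. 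I would cite the identification equation from Section~\ref{sec:htc} rather than reprove this.

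\emph{Internal witness $y\in\mathrm{htr}(v)$.} Using the second case of Lemma~\ref{lem:row_if}, the first two terms combine as before into $\varepsilon_y\varepsilon_v-\E{\varepsilon_y\varepsilon_v}$, and the centering vanishes again by $\E{Z_y\varepsilon_v}=0$ with $Z_y=\varepsilon_y$. For the correction terms I would exchange the order of summation:
\[
-\sum_{q\in\mathrm{pa}(y)}\Bigl(\Sigma_{qv}-\sum_p\beta_{pv}\Sigma_{qp}\Bigr)\phi_{\beta_{qy}}=-\sum_{q\in\mathrm{pa}(y)}\E{X_q\varepsilon_v}\,\phi_{\beta_{qy}},
\]
since $\Sigma_{qv}-\sum_p\beta_{pv}\Sigma_{qp}=\E{X_q(X_v-\beta_v^\top X_{\mathrm{pa}(v)})}$. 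The influence functions $\phi_{\beta_{qy}}$ are well defined because $y\prec v$ in the HTC ordering, so they are available by the induction underlying Lemma~\ref{lem:row_if}.

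There is no genuine obstacle; the only point needing care is the vanishing of the constant centering terms in both cases. I will state explicitly that this is precisely the population identifying equation $b_v-A_v\beta_v=0$ evaluated row-wise, so that $R_{y,v}$ is mean zero as an influence function must be.
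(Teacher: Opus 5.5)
Your proposal is correct and follows the paper's proof: substitute Lemma~\ref{lem:row_if} into the linear combination, collapse the random part via $\varepsilon_v=X_v-\beta_v^\top X_{\mathrm{pa}(v)}$, and identify the coefficient of $\phi_{\beta_{qy}}$ as $-\E{X_q\varepsilon_v}$. The only difference is how the centering terms are killed. You cite the HTC orthogonality $\E{Z_y\varepsilon_v}=0$, whereas the paper verifies it directly: in the external case it expands $\E{X_y\varepsilon_v}=\sum_k[(I-\beta)^{-1}]_{ky}\Omega_{kv}$ and applies the half-trek argument, and in the internal case it uses $\Omega_{yv}=0$ from the marginal independence model. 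This is exactly the graphical argument you sketch as the equivalent justification.
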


\subsection{The Influence Function $\phi_{\beta_v}$}

The influence function for $\beta_v=A_v^{-1}b_v$ follows by the delta method applied to the identification equation, with Lemmas~\ref{lem:row_if} and~\ref{lem:explicit_row} supplying the row-wise derivatives.

\newcommand{\TextHTCInfluenceFunction}{ 
The functional $\phi_{\beta_v}=A_v^{-1}R_v$ is an influence function for $\beta_v$ in $\mathcal{M}_G$.
For the edge $p_j\to v$, this yields $\phi_{\beta_{p_jv}}=e_j^{\top}A_v^{-1}R_v$.}
\begin{theorem}[HTC influence function]
\label{thm:htc_if}
\TextHTCInfluenceFunction
\end{theorem}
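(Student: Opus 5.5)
The proof goes by strong induction along the HTC ordering $\prec$. At each node we differentiate the population identity $A_v\beta_v=b_v$ along an arbitrary quadratic-mean differentiable submodel $\gamma\mapsto P_\gamma\in\mathcal{M}_G$ with score $S$.

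\textbf{Induction hypothesis.} For every $y\prec v$, $\phi_{\beta_y}=A_y^{-1}R_y$ is mean zero and square integrable, and $\E{\phi_{\beta_y}S}=\frac{d}{d\gamma}\big|_{0}\beta_{y,\gamma}$. The base case is a $\prec$-minimal node. It has only external witnesses, so $R_v=(X_y\varepsilon_v)_{y\in Y_v}$ involves no earlier stage, and the argument below applies verbatim.

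\textbf{Step 1 (pathwise derivative).} Along the path the identity $A_{v,\gamma}\beta_{v,\gamma}=b_{v,\gamma}$ holds for every $\gamma$. This uses Lemma~\ref{lem:htc_matrix}, which gives dependence only on $(\beta,\Omega)$, together with Assumption~\ref{asm:regularity}.\ref{asm:2}, which makes $A_{v,\gamma}$ invertible near $\gamma=0$. Differentiating gives
\[
\dot\beta_v=A_v^{-1}\bigl(\dot b_v-\dot A_v\beta_v\bigr),
\]
whose $y$-th entry inside the parentheses is $\frac{d}{d\gamma}\bigl[M_y(v)-\sum_{p}\beta_{pv}M_y(p)\bigr]$ with $\beta_v$ held at its true value.

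\textbf{Step 2 (moment derivatives).} We need $\frac{d}{d\gamma}M_{y,\gamma}(t)=\E{\phi_{M_y(t)}S}$ with $\phi_{M_y(t)}$ as in Lemma~\ref{lem:row_if}.
\begin{itemize}
\item For external witnesses this is the standard covariance derivative. It requires $\E{(X_yX_t)^2}<\infty$, which holds by the fourth moments in $\mathcal{E}$, and differentiability in quadratic mean.
\item For internal witnesses, write $M_{y,\gamma}(t)=\Sigma_{\gamma,yt}-\sum_q\beta_{qy,\gamma}\Sigma_{\gamma,qt}$ and apply the product rule. The terms with $\dot\beta_{qy}$ are handled by the induction hypothesis, which is legitimate because $y\in Y_v\cap\mathrm{htr}(v)$ forces $y\prec v$.
\end{itemize}
Lemma~\ref{lem:explicit_row} then collapses the linear combination into $R_{y,v}$. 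So $\dot b_v-\dot A_v\beta_v=\E{R_vS}$, and hence $\dot\beta_v=\E{A_v^{-1}R_vS}$.

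\textbf{Step 3 (admissibility of $\phi_{\beta_v}$).} We must check that $A_v^{-1}R_v$ is mean zero and in $L^2$.
\begin{itemize}
\item \emph{Mean zero.} For external witnesses, $\E{X_y\varepsilon_v}=0$ by the HTC orthogonality. For internal witnesses, $\E{\varepsilon_y\varepsilon_v}=0$ because $y\notin\mathrm{sib}(v)\cup\{v\}$ and the connected set Markov property applies. The correction terms have mean zero by the induction hypothesis.
\item \emph{Square integrability.} This follows from finite fourth moments and the inductive $L^2$ bound on the $\phi_{\beta_{qy}}$.
\end{itemize}
Since the identity $\E{\phi S}=\dot\beta_v$ holds for every submodel, $\phi_{\beta_v}$ is an influence function (a gradient) in the sense of \citet[§25.3]{Vaart_1998}. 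The coordinate statement follows by taking $e_j^\top$.

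\textbf{Main obstacle.} The delicate step is justifying Step 2 for internal witnesses. One must keep track that the recursive corrections only reference nodes strictly earlier in $\prec$, so the induction is well founded even in cyclic graphs. One must also check that differentiating $\beta_{y,\gamma}$ is legitimate, meaning the earlier identities hold along the whole path. I would handle this by proving, inside the induction, the stronger statement that $\gamma\mapsto\beta_{y,\gamma}$ is differentiable at $0$ with derivative $\E{\phi_{\beta_y}S}$. This follows from the implicit-function form $\beta_{y,\gamma}=A_{y,\gamma}^{-1}b_{y,\gamma}$ and the differentiability of second moments along quadratic-mean differentiable paths with bounded fourth moments.
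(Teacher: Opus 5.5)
Your proposal is correct and follows essentially the paper's argument: induction along $\prec$, the chain rule $\dot\beta_v=A_v^{-1}(\dot b_v-\dot A_v\beta_v)$, and the inductive identity $\E{\phi_{\beta_{qy}}S}=\dot\beta_{qy}$ to absorb the internal-witness corrections, which together give $\E{R_vS}=A_v\dot\beta_v$ (the paper reaches the same row identity by differentiating $\E{Z_y\varepsilon_v}=0$ directly with the product rule). One small correction: the path identity $A_{v,\gamma}\beta_{v,\gamma}=b_{v,\gamma}$ holds because the HTC orthogonality $\E{Z_y\varepsilon_v}=0$ is valid at every $P_\gamma\in\mathcal{M}_G$, not because of Lemma~\ref{lem:htc_matrix}, which together with continuity in $\gamma$ only supports the invertibility of $A_{v,\gamma}$ near $\gamma=0$.
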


\begin{example}
In the classical IV case with $|\mathrm{pa}(v)|=1$ and one external witness $y\notin\mathrm{htr}(v)$, Lemma~\ref{lem:explicit_row} gives $R_{y,v}=X_y\varepsilon_v$ and $A_v=\E{X_yX_p}$, so the HTC influence function $\phi_{\beta_{pv}}$ in Theorem \ref{thm:htc_if} evaluates to $ X_y\varepsilon_v/\E{X_yX_p}$, the standard 2SLS influence function.
\end{example}

\begin{remark}[Semiparametric efficiency of HTC estimators]
\label{rem:efficiency}
The influence function $\phi_{\beta_v}=A_v^{-1}R_v$ of Theorem~\ref{thm:htc_if} is the influence function of the HTC estimator for the chosen witness set $Y_v$, not necessarily the efficient influence function of $\beta_v$ in $\mathcal{M}_G$.
In the graph $1\to 3\to 4$, $2\to 3\leftrightarrow 4$, the efficient influence function for $\beta_{34}$ uses both instruments simultaneously \citep{CHAMBERLAIN1987305,Newey1990}, a bound no single-witness HTC estimator achieves.
\end{remark}

\subsection{The Asymptotic Variance $\mathcal{V}_v$}\label{sec:variance}

The asymptotic normality of the HTC estimator, written $\hat\beta_{v,n}$ to make the sample-size dependence explicit, follows from Theorem~\ref{thm:htc_if} by the central limit theorem: since $\phi_{\beta_v}=A_v^{-1}R_v$ is mean-zero and square-integrable, the sandwich formula below is its asymptotic~variance.

\newcommand{\TextPropRecursiveVariance}{
The HTC estimator $\hat\beta_{v,n}$ satisfies
\[
\sqrt{n}(\hat\beta_{v,n}-\beta_v)\xrightarrow{d}\mathcal{N}(0,\mathcal{V}_v) \quad \text{with} \quad
  \mathcal{V}_v
  =
  A_v^{-1}\,\E{R_vR_v^{\top}}\,A_v^{-\top}.
\]
For the edge $p_j\to v$, $\mathcal{V}_{v}[j,j] = e_j^{\top}\mathcal{V}_v e_j$.
The covariance $\E{R_vR_v^{\top}}$ expands for $a,b\in Y_v$ to:\\
\textbf{(i)} $a,b\notin\mathrm{htr}(v)$:
\[
  \E{R_{a,v}R_{b,v}}
  =
  \E{X_aX_b\varepsilon_v^2}.
\]
Both witnesses are external, so $R_{a,v}=X_a\varepsilon_v$ and $R_{b,v}=X_b\varepsilon_v$ with no correction.\\
\textbf{(ii)} $a\in\mathrm{htr}(v)$, $b\notin\mathrm{htr}(v)$:
\[
  \E{R_{a,v}R_{b,v}}
  =
  \E{\varepsilon_aX_b\varepsilon_v^2}
  -
  \sum_{q\in\mathrm{pa}(a)}
  \E{X_q\varepsilon_v}\,
  \E{\phi_{\beta_{qa}}X_b\varepsilon_v}.
\]
The subtracted term corrects for the uncertainty of $\beta_{qa}$: estimating the residual instrument $Z_a=\varepsilon_a$ requires $\hat\beta_{qa}$, which introduces a correction proportional to $\phi_{\beta_{qa}}$ into~$R_{a,v}$.\\
\textbf{(iii)} $a,b\in\mathrm{htr}(v)$:
\begin{align*}
  \E{R_{a,v}R_{b,v}}
  &=
  \E{\varepsilon_a\varepsilon_b\varepsilon_v^2}
  -\sum_{r\in\mathrm{pa}(b)}
  \E{X_r\varepsilon_v}\,\E{\varepsilon_a\varepsilon_v\phi_{\beta_{rb}}}
  -\sum_{q\in\mathrm{pa}(a)}
  \E{X_q\varepsilon_v}\,\E{\phi_{\beta_{qa}}\varepsilon_b\varepsilon_v} \\
  &\quad
  +\sum_{q\in\mathrm{pa}(a)}\sum_{r\in\mathrm{pa}(b)}
  \E{X_q\varepsilon_v}\,\E{X_r\varepsilon_v}\,
  \E{\phi_{\beta_{qa}}\phi_{\beta_{rb}}}.
\end{align*}
With both witnesses internal, first-stage corrections from $\phi_{\beta_{qa}}$ and $\phi_{\beta_{rb}}$ enter both factors; the double sum captures the covariance between these two correction terms.}
\begin{proposition}[Recursive variance formula] \label{prop:recursive_variance} 
\TextPropRecursiveVariance
\end{proposition}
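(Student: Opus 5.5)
The plan has two parts. First I show that the HTC estimator is asymptotically linear with influence function $\phi_{\beta_v}=A_v^{-1}R_v$ from Theorem~\ref{thm:htc_if}; the central limit theorem and Slutsky then give the normal limit. Second I expand $\E{R_vR_v^\top}$ using the row formulas of Lemma~\ref{lem:explicit_row}.

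\textbf{Asymptotic linearity.} I proceed by induction along the HTC ordering $\prec$.

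For nodes that are minimal in $\prec$, every witness is external. Then $\hat A_v$ and $\hat b_v$ are sample second moments, so
\[
\sqrt n(\hat A_v-A_v)=n^{-1/2}\sum_i\bigl(X^{(i)}_{Y_v}X^{(i)\top}_{\mathrm{pa}(v)}-A_v\bigr)+o_P(1),
\]
and the same holds for $\hat b_v$. These expansions hold because $\varepsilon\in L^4_0$ gives finite fourth moments of $X=(I-\beta)^{-\top}\varepsilon$. Next I write
\[
\hat\beta_v-\beta_v=\hat A_v^{-1}(\hat b_v-\hat A_v\beta_v).
\]
Since $\hat A_v\to A_v$ in probability and $A_v$ is invertible by Assumption~\ref{asm:regularity}.\ref{asm:2}, the continuous mapping theorem lets me replace $\hat A_v^{-1}$ by $A_v^{-1}$. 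It remains to linearize the empirical row residual $\hat b_v-\hat A_v\beta_v$, which is exactly what the derivation of $R_{y,v}$ in Lemma~\ref{lem:explicit_row} does.

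For internal witnesses, $\hat Z_y=X_y-\hat\beta_y^\top X_{\mathrm{pa}(y)}$. The $y$-th row residual then equals the sample mean of $\varepsilon_y\varepsilon_v$ minus
\[
(\hat\beta_y-\beta_y)^\top\, n^{-1}\sum_i X^{(i)}_{\mathrm{pa}(y)}\varepsilon^{(i)}_v .
\]
By the induction hypothesis, $\hat\beta_y-\beta_y=O_P(n^{-1/2})$ and $\hat\beta_y$ is asymptotically linear with influence function $\phi_{\beta_y}$. The empirical mean $n^{-1}\sum_i X^{(i)}_{\mathrm{pa}(y)}\varepsilon^{(i)}_v$ converges to $\E{X_{\mathrm{pa}(y)}\varepsilon_v}$. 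Their product is therefore $\E{X_{\mathrm{pa}(y)}\varepsilon_v}^\top(\hat\beta_y-\beta_y)+o_P(n^{-1/2})$, which reproduces the correction term in $R_{y,v}$.

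With the expansion in hand, $\E{R_v}=0$ follows from the HTC orthogonality $\E{Z_y\varepsilon_v}=0$ and the mean-zero property of the earlier $\phi_{\beta_{qy}}$. Square-integrability follows from fourth moments together with induction. The multivariate CLT then yields $\mathcal N(0,A_v^{-1}\E{R_vR_v^\top}A_v^{-\top})$. The marginal statement for the edge $p_j\to v$ is the projection onto $e_j$.

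\textbf{Covariance expansion.} I substitute the two cases of Lemma~\ref{lem:explicit_row} and multiply out, using bilinearity and pulling out the deterministic coefficients $\E{X_q\varepsilon_v}$.
\begin{itemize}
\item Case (i) is immediate from $R_{a,v}R_{b,v}=X_aX_b\varepsilon_v^2$.
\item Case (ii) multiplies $\varepsilon_a\varepsilon_v-\sum_q\E{X_q\varepsilon_v}\phi_{\beta_{qa}}$ by $X_b\varepsilon_v$.
\item Case (iii) produces four terms: the main term, two cross terms, and the double sum. The double sum contains $\E{\phi_{\beta_{qa}}\phi_{\beta_{rb}}}$, a block of the joint covariance of earlier-stage influence functions.
\end{itemize}
These blocks, together with $\E{\phi_{\beta_{qa}}X_b\varepsilon_v}$, are again computed by substituting $\phi_{\beta_a}=A_a^{-1}R_a$ and recursing down $\prec$. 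Because the ordering is finite and well-founded, the recursion terminates.

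\textbf{Main obstacle.} The step I expect to be hardest is the rigorous stochastic-equicontinuity argument behind the plug-in instruments. I must show that estimating $\beta_y$ enters only through the product term above and leaves an $o_P(n^{-1/2})$ remainder. Because everything is bilinear in $(\hat\beta_y,\text{sample moments})$, I would handle it by the explicit algebraic decomposition above rather than by general empirical-process theory. This requires joint asymptotic linearity of all earlier $\hat\beta_y$, which I carry along in the induction hypothesis.
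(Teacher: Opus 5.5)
Your proposal is correct and has the same skeleton as the paper's proof. It establishes $\sqrt n(\hat\beta_v-\beta_v)=n^{-1/2}\sum_i A_v^{-1}R_v(X^{(i)})+o_P(1)$, then applies the CLT and Slutsky to obtain the sandwich $A_v^{-1}\E{R_vR_v^\top}A_v^{-\top}$, and finally expands $\E{R_{a,v}R_{b,v}}$ bilinearly via Lemma~\ref{lem:explicit_row}. That last part is identical to the paper's.

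The difference is in the linearization step. The paper disposes of it in one sentence: it appeals to stage-wise Z-estimator theory and asserts that the resulting expansion matches the influence function of Theorem~\ref{thm:htc_if}. You derive it directly by induction along $\prec$, using that the plug-in instrument enters bilinearly. The identity $\hat Z_y=\varepsilon_y-(\hat\beta_y-\beta_y)^\top X_{\mathrm{pa}(y)}$ splits the $y$-th empirical row residual exactly into two pieces:
\begin{itemize}
\item the sample mean of $\varepsilon_y\varepsilon_v$;
\item minus $(\hat\beta_y-\beta_y)^\top n^{-1}\sum_i X^{(i)}_{\mathrm{pa}(y)}\varepsilon^{(i)}_v$.
\end{itemize}
The law of large numbers and the induction hypothesis turn the second piece into $n^{-1}\sum_i\sum_{q\in\mathrm{pa}(y)}\E{X_q\varepsilon_v}\phi_{\beta_{qy}}(X^{(i)})+o_P(n^{-1/2})$.

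What your route buys:
\begin{itemize}
\item No stochastic-equicontinuity or empirical-process argument is needed. The ``main obstacle'' you flag is already resolved by your own exact decomposition.
\item The correction term is obtained as the estimator's own expansion, rather than identified with a pathwise derivative of the functional. The paper asserts this link rather than checking it.
\end{itemize}
The paper's route is shorter, but it leans on general theory whose conditions it does not verify.

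Minor bookkeeping points:
\begin{itemize}
\item Replacing $\hat A_v^{-1}$ by $A_v^{-1}$ needs $\sqrt n(\hat b_v-\hat A_v\beta_v)=O_P(1)$. Your linearization supplies this, so state it before the replacement.
\item For non-minimal $v$, consistency of $\hat A_v$ uses consistency of the earlier $\hat\beta_y$.
\item Joint asymptotic linearity of the earlier stages is automatic, since each $\phi_{\beta_y}$ is a fixed function of $X$.
\end{itemize}
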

The correction terms $\phi_{\beta_{qy}}$ in $R_v$ in Lemma \ref{lem:explicit_row} account for the full variation from estimating $\hat\beta_{\prec v}$, ensuring that the stated asymptotic distribution of $\hat\beta_v$ is valid at each estimation stage.

Proposition~\ref{prop:recursive_variance} involves cross-variances $\E{\phi_{\beta_{qa}}X_b\varepsilon_v}$ of earlier influence functions, giving $\mathcal{V}_v$ an apparently recursive definition. 
In Figure~\ref{fig:G1}, however, the recursion for $v=5$ bottoms out at node~$2$, which has $Y_2\cap\mathrm{htr}(2)=\emptyset$; there $\phi_{\beta_{12}}=X_1\varepsilon_2/\E{X_1^2}$ and $\E{\phi_{\beta_{12}}X_b\varepsilon_v}=\E{X_1\varepsilon_2 X_b \varepsilon_v} / \E{X_1^2}$ is a raw 2SLS fourth-order moment.

\newcommand{\TextLemmaVariancerecursion}{The covariance $\E{R_vR_v^{\top}}$ reduces to finitely many fourth-order moments of observables.
Each cross-variance $\E{\phi_{\beta_{qa}}X_b\varepsilon_v}$ equals
\[
  \E{\phi_{\beta_{qa}}X_b\varepsilon_v}
  =
  e_j^{\top}A_a^{-1}\,\E{R_aX_b\varepsilon_v},
\]
and for each $y\in Y_a$,
\[
  \E{R_{y,a}X_b\varepsilon_v}
  =
  \begin{cases}
    \E{X_y\varepsilon_aX_b\varepsilon_v} & y\notin\mathrm{htr}(a),\\
    \E{\varepsilon_y\varepsilon_aX_b\varepsilon_v}
    -\displaystyle\sum_{q'\in\mathrm{pa}(y)}\E{X_{q'}\varepsilon_a}\,
    \E{\phi_{\beta_{q'y}}X_b\varepsilon_v} & y\in\mathrm{htr}(a).
  \end{cases}
\]
Each application of the $y\in\mathrm{htr}(a)$ branch strictly descends in the HTC order $\prec$; the recursion bottoms out at nodes $u$ with $Y_u\cap\mathrm{htr}(u)=\emptyset$, where every entry equals a pure fourth-order moment of observables.}
\begin{lemma}[Termination and computability]
\label{lem:variance_recursion}
\TextLemmaVariancerecursion
\end{lemma}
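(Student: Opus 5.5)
The proof is an induction along the HTC ordering $\prec$, carried out on the family of cross-moments $\E{\phi_{\beta_{qa}}X_b\varepsilon_v}$ that appear in Proposition~\ref{prop:recursive_variance}.

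\textbf{Step 1: the displayed identity.} Take an edge $q\to a$ and let $j$ be the index of $q$ in $\mathrm{pa}(a)$. By Theorem~\ref{thm:htc_if}, $\phi_{\beta_{qa}}=e_j^{\top}A_a^{-1}R_a$. Here $A_a$ is a deterministic matrix, so linearity of expectation gives
\[
\E{\phi_{\beta_{qa}}X_b\varepsilon_v}=e_j^{\top}A_a^{-1}\E{R_aX_b\varepsilon_v}.
\]
This expectation is finite because every entry of $R_a$ is, inductively, a linear combination of products of two coordinates of $X$, and $X_b\varepsilon_v$ is again such a product. Finite fourth moments (Definition~\ref{def:parameter_space} together with $X=(I-\beta)^{-\top}\varepsilon$) then make all expectations well defined.

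\textbf{Step 2: the row-wise formula.} For $y\in Y_a$ we substitute the explicit form of $R_{y,a}$ from Lemma~\ref{lem:explicit_row}, multiply by $X_b\varepsilon_v$, and take expectations.
\begin{itemize}
\item If $y\notin\mathrm{htr}(a)$, this gives $\E{X_y\varepsilon_aX_b\varepsilon_v}$ directly.
\item If $y\in\mathrm{htr}(a)$, it gives $\E{\varepsilon_y\varepsilon_aX_b\varepsilon_v}$ minus the sum over $q'\in\mathrm{pa}(y)$ of $\E{X_{q'}\varepsilon_a}\E{\phi_{\beta_{q'y}}X_b\varepsilon_v}$. The constant $\E{X_{q'}\varepsilon_a}$ factors out of the expectation.
\end{itemize}
Since $\varepsilon_u=X_u-\beta_u^{\top}X_{\mathrm{pa}(u)}$ is linear in $X$ with known coefficients, each leading term is a finite linear combination of fourth-order moments of observables. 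The new cross-moments produced in the second case are again of the form $\E{\phi_{\beta_{q'y}}X_b\varepsilon_v}$.

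\textbf{Step 3: termination.} In the second case $y\in Y_a\cap\mathrm{htr}(a)$, so the definition of the HTC ordering forces $y\prec a$. Each recursive call therefore moves to a node strictly lower in $\prec$. Because $\prec$ is a total order on a finite set, there are no infinite descending chains, and every branch stops after at most $|V|$ steps.

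The base case is a node $u$ with $Y_u\cap\mathrm{htr}(u)=\emptyset$. There every row is external, so $R_u=(X_y\varepsilon_u)_{y\in Y_u}$ and the cross-moment is $e_j^{\top}A_u^{-1}\E{X_{Y_u}\varepsilon_uX_b\varepsilon_v}$, a pure fourth moment.

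Formally, I would run a strong induction on the position of $a$ in $\prec$ with the following hypothesis: for all $b,v$, the quantity $\E{\phi_{\beta_{qa}}X_b\varepsilon_v}$ is a finite combination of fourth-order moments, with coefficients that are rational functions of $\Sigma$.

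\textbf{Step 4: the remaining terms.} The double-sum and $\E{\varepsilon_a\varepsilon_v\phi_{\beta_{rb}}}$ terms in Proposition~\ref{prop:recursive_variance}(iii) need the same treatment. I would handle $\E{\varepsilon_a\varepsilon_v\phi}$ identically by replacing $X_b$ with the linear form $\varepsilon_a$. For $\E{\phi_{\beta_{qa}}\phi_{\beta_{rb}}}$, I would expand both factors via Theorem~\ref{thm:htc_if} and Lemma~\ref{lem:explicit_row} and induct on the pair $(a,b)$ under the product order, using that at least one coordinate strictly descends at each step.

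The main obstacle is this two-sided recursion. One has to make sure the finitely many resulting terms never revisit a pair, and finiteness of the descending chains in $\prec\times\prec$ guarantees that.
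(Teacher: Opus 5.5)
Your proposal is correct and follows the paper's proof essentially step for step. Your Steps 1--3 are the paper's Claims 1--3: pull the deterministic $A_a^{-1}$ out of the expectation, expand $R_{y,a}$ via Lemma~\ref{lem:explicit_row}, and use $y\prec a$ for $y\in Y_a\cap\mathrm{htr}(a)$ to get a finite descent that ends at nodes with only external witnesses. Your Step 4 makes the argument more complete than the paper's, which only treats cross-moments of the form $\E{\phi_{\beta_{qa}}X_b\varepsilon_v}$ and then asserts that $\E{R_vR_v^{\top}}$ reduces to them; case (iii) of Proposition~\ref{prop:recursive_variance} also contains $\E{\varepsilon_a\varepsilon_v\phi_{\beta_{rb}}}$ and $\E{\phi_{\beta_{qa}}\phi_{\beta_{rb}}}$, and your linearity argument and pair recursion handle these (a new pair term arises only when both witnesses are internal, so both coordinates in fact drop strictly).
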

Under Gaussian errors, Isserlis' theorem expresses these fourth-order moments as polynomials in $(\Sigma,\Omega,\beta)$, making $\mathcal{V}_v$ an explicit rational function of the model parameters.

The estimator $\hat{\mathcal{V}}_v$ is obtained by replacing every expectation in Proposition~\ref{prop:recursive_variance} and Lemma~\ref{lem:variance_recursion} with the corresponding empirical moment, evaluated on the observed sample.

\section{Numerical Experiments} \label{sec:numerical_experiments}
The simulation data aligned with the graph in Figure \ref{fig:G1} are generated by Gaussian and modified Gamma error distributions. Full analysis details are provided in Appendix~\ref{app:calibration}. 
\subsection{Calibration}
\begin{figure} \centering
\begin{tikzpicture}[scale = 0.7, >=Stealth, node style/.style={scale = 1, font=\small}]
\node[node style, draw, circle] (A) at (1.5,3) {$1$};
\node[node style, draw, circle] (B) at (-1.5,0) {$2$};
\node[node style, draw, circle] (C) at (1.5,0) {$3$};
\node[node style, draw, circle] (D) at (-4.5,0) {$4$};
\node[node style, draw, circle] (E) at (4.5,1.5) {$5$};
\node[black, scale = 1] (F) at (3.25,2.65) {$\beta_{15}$};
\node[black, scale = 1] (G) at (3.25,0.35) {$\beta_{35}$};
\node[black, scale = 1] (H) at (0.2,0.9) {$\beta_{12}$};

\def\dashlength{3pt}
\draw[blue, ->] (A) -- (B);
\draw[blue, ->] (B) -- (D);
\draw[blue, ->] (C) -- (E);
\draw[blue, ->] (A) -- (E);
\draw[red, <->] (A) -- (C);
\draw[red, <->] (A) -- (D);
\draw[red, <->] (A) to [bend right=30] (E);
\end{tikzpicture}
\caption{Five-node directed mixed graph (cf. Example 6(b) in \cite{Foygel2012}) with HTC ordering $2\prec 4\prec 5$.}
\label{fig:G1}
\end{figure}
\begin{figure} 
\centering
\includegraphics[width = 0.9\textwidth, trim=0 23pt 0 14pt, clip]{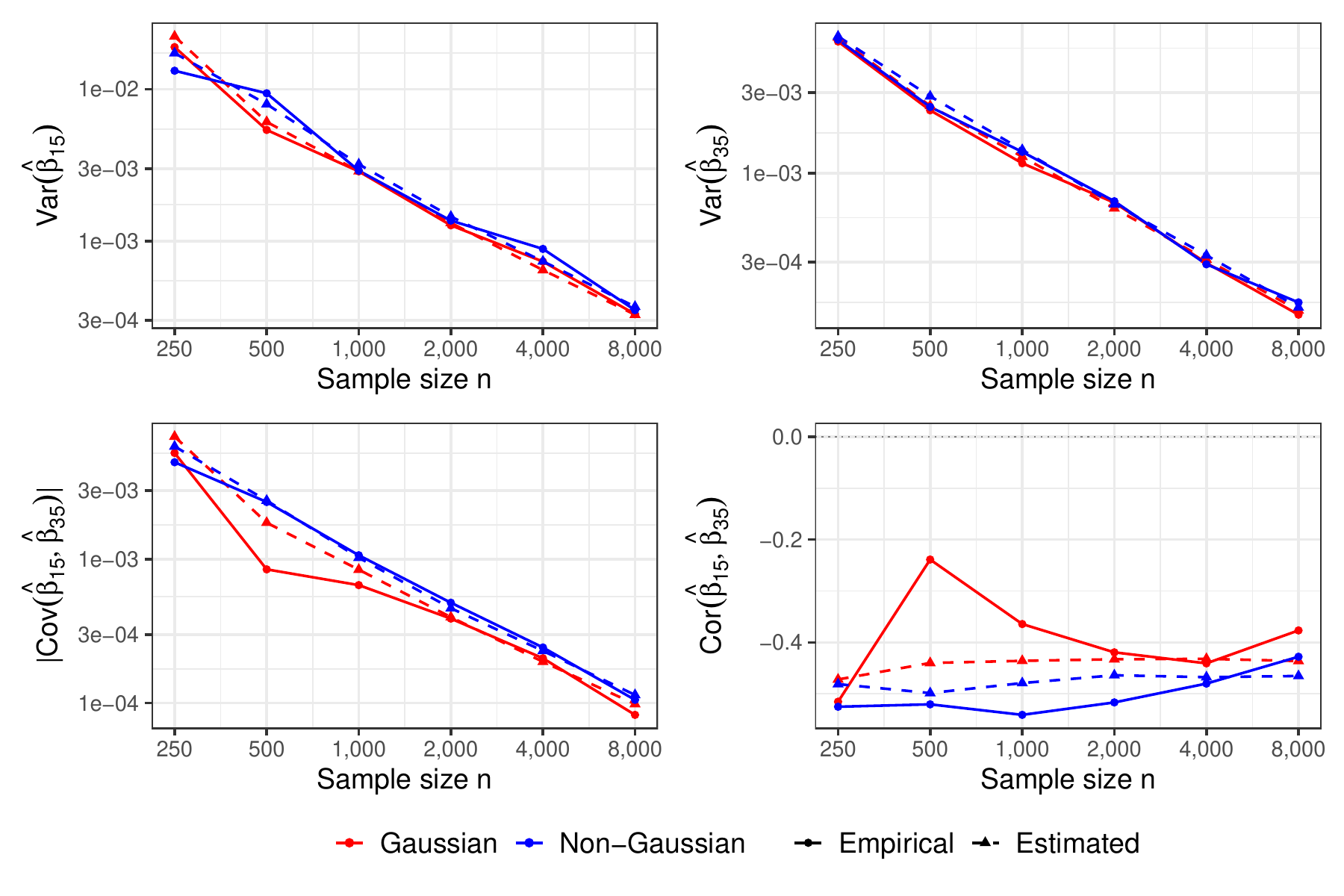}
\caption{Calibration of $\hat{\mathcal{V}}_5$ for $(\hat\beta_{15},\hat\beta_{35})$ in Figure~\ref{fig:G1} ($Y_5=\{3,4\}$, internal witness $4\in\mathrm{htr}(5)$, external witness $3\notin\mathrm{htr}(5)$): empirical variances, covariances, and correlations over $100$ replications (solid) versus mean $\hat{\mathcal{V}}_5/n$ (dashed), across sample sizes $n$ under Gaussian (red) and non-Gaussian (blue) errors.}
\label{fig:calibration}
\end{figure}
Figure~\ref{fig:calibration} shows that the asymptotic variance estimator $\hat{\mathcal{V}}_v$ is well calibrated under both Gaussian and non-Gaussian errors.
Since $\hat\beta_{15}$ and $\hat\beta_{35}$ are negatively correlated, the bottom-left panel displays the absolute covariance.

\subsection{Witness Choice and Estimation Variance}
\begin{figure}
\centering
\includegraphics[width = \textwidth, trim=0 20pt 0 10pt, clip]{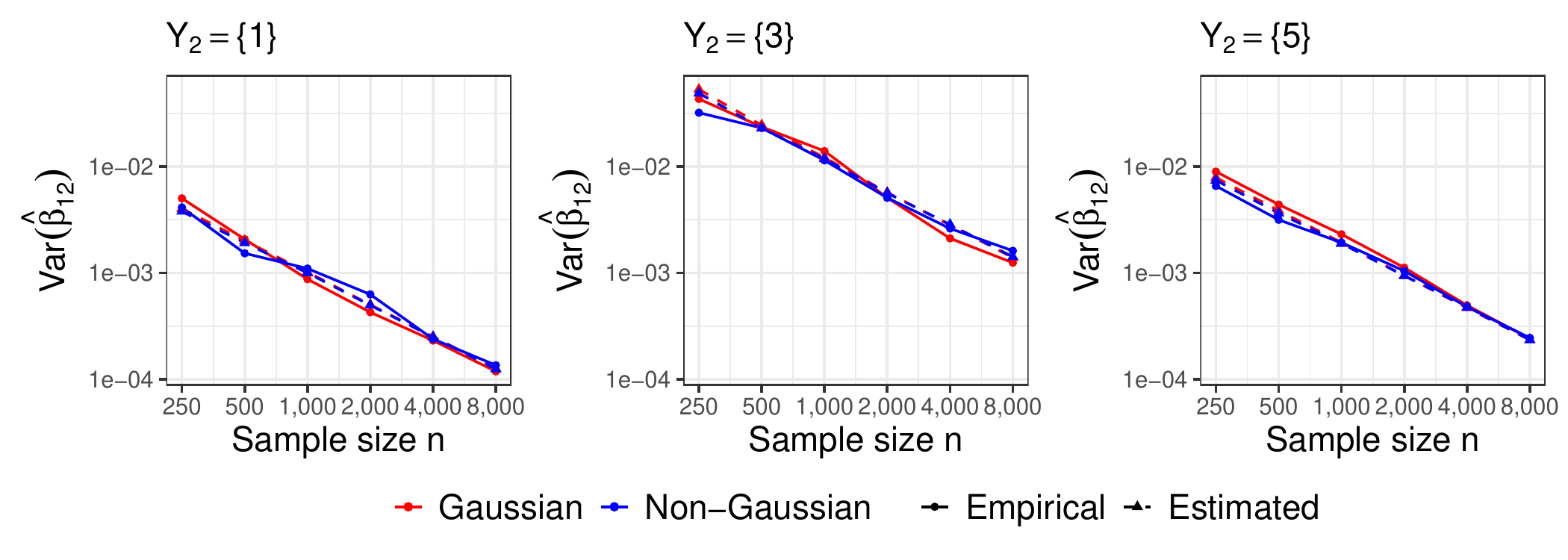}
\caption{Estimation variance of $\hat\beta_{12}$ in Figure~\ref{fig:G1} for three witness choices $Y_2\in\bigl\{\{1\},\allowbreak\{3\},\allowbreak\{5\}\bigr\}$: empirical variance over $100$ replications (solid) versus mean $\hat{\mathcal{V}}_2/n$ (dashed), across sample sizes $n$ under Gaussian (red) and non-Gaussian (blue) errors.}
\label{fig:efficiency}
\end{figure}

The witness set choice has an effect on estimation efficiency: Figure~\ref{fig:efficiency} shows that the asymptotic variance of $\hat\beta_{12}$ varies substantially across the three valid witness choices $Y_2\in\bigl\{\{1\},\{3\},\{5\}\bigr\}$.
In practice, the analyst must select a witness set, and a lower-variance choice is preferable.

\begin{remark}[Witness selection]
\label{rem:witness_selection}
The asymptotic variance estimator $\hat{\mathcal{V}}_v$ of Proposition~\ref{prop:recursive_variance} provides, for any fixed $Y_v$, a consistent estimate of $\mathcal{V}_v(Y_v)$, enabling post-hoc comparison of competing witness sets.
Selecting a variance-minimizing set in advance is an open problem: while valid HTC witness sets are combinatorially characterized, their complete enumeration requires searching over exponentially many candidate subsets of $V$, so no algorithmic guidance towards a lower-variance choice without exhaustive search is currently available.
\end{remark}

\section{Statistical Inference}\label{sec:inference}

Confidence regions and Wald tests for $\beta_v$ follow from the $\mathcal{N}(0,\mathcal{V}_v)$ limit of Proposition~\ref{prop:recursive_variance} via the asymptotic variance estimator $\hat{\mathcal{V}}_v$.

\newcommand{\TextPropInference}{
For $\alpha\in(0,1)$, write $z_{\alpha/2}$ for the $\alpha/2$ quantile of the standard normal distribution and $\chi^2_{k,1-\alpha}$ for the $(1-\alpha)$ quantile of the chi-squared distribution with $k$ degrees of freedom $\chi^2_k$. The following hold.
\begin{enumerate}[label=(\roman*), itemsep=3pt, parsep=0pt, leftmargin=2em]
\item \textit{(Confidence ellipsoid.)}
      An asymptotic $(1-\alpha)$ confidence region for $\beta_v\in\mathbb{R}^{|\mathrm{pa}(v)|}$ is
      \[
        \mathcal{C}_n
        =
        \bigl\{
          \beta\in\mathbb{R}^{|\mathrm{pa}(v)|}:
          n(\hat\beta_v-\beta)^{\top}\hat{\mathcal{V}}_v^{-1}(\hat\beta_v-\beta)
          \leq\chi^2_{|pa(v)|,1-\alpha}
        \bigr\}.
      \]
\item \textit{(Marginal intervals.)}
      An asymptotic $(1-\alpha)$ confidence interval for the edge $\beta_{p_jv}$ is
      \[
        \hat\beta_{p_jv}\pm z_{\alpha/2}\sqrt{\hat{\mathcal{V}}_v[j,j]/n}.
      \]
\item \textit{(Wald test.)}
      For $H_0\colon C\beta_v=c$ with full row rank $C\in\mathbb{R}^{r\times |\mathrm{pa}(v)|}$ and $c\in\mathbb{R}^r$, the Wald statistic satisfies
      \[
        W_n
        =
        n(C\hat\beta_v-c)^{\top}(C\hat{\mathcal{V}}_vC^{\top})^{-1}(C\hat\beta_v-c)
        \xrightarrow{d}\chi^2_r
        \quad\text{under }H_0.
      \]
      The matrix $C\hat{\mathcal{V}}_vC^{\top}$ is generically invertible because $C$ has full row rank.
\end{enumerate}}
\begin{proposition}[Confidence regions and Wald tests]
\label{prop:inference}
\TextPropInference
\end{proposition}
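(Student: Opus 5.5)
The plan is to derive all three parts from the limit $\sqrt{n}(\hat\beta_{v,n}-\beta_v)\xrightarrow{d}\mathcal{N}(0,\mathcal{V}_v)$ of Proposition~\ref{prop:recursive_variance}, combined with consistency of the plug-in estimator, $\hat{\mathcal{V}}_v\xrightarrow{P}\mathcal{V}_v$, and invertibility of $\mathcal{V}_v$. Once these ingredients are in place, each claim follows from Slutsky's lemma and the continuous mapping theorem.

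\textbf{Consistency of $\hat{\mathcal{V}}_v$.} By Lemma~\ref{lem:variance_recursion}, $\mathcal{V}_v$ is a fixed continuous function of three kinds of quantities. The first is finitely many fourth-order moments of $X$, such as $\E{X_aX_b\varepsilon_v^2}$; these are polynomial in $X$ and in the $\beta$'s. The second is second moments, which enter through $A_u$ and $\E{X_q\varepsilon_v}$. The third is the inverses $A_u^{-1}$ along the HTC chain $u\preceq v$.

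I will proceed by induction along $\prec$. At each stage, $\hat\beta_u\xrightarrow{P}\beta_u$ holds by the preceding asymptotic normality. Expanding $\varepsilon_u=X_u-\beta_u^\top X_{\mathrm{pa}(u)}$ shows that each empirical moment with estimated residuals is a polynomial in empirical raw moments and $\hat\beta$'s. The empirical raw moments converge in probability by the weak law of large numbers, which applies because fourth moments are finite ($\varepsilon\in L^4_0$ implies $X\in L^4$). The continuous mapping theorem then gives convergence of the whole expression, with invertibility of each $A_u$ ensured by Assumption~\ref{asm:regularity}.\ref{asm:2}.

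\textbf{Main obstacle: invertibility of $\mathcal{V}_v$.} This is the step I expect to be delicate. Since $\mathcal{V}_v=A_v^{-1}\E{R_vR_v^\top}A_v^{-\top}$, it suffices that $\E{R_vR_v^\top}$ is positive definite. I would handle this as the statement itself does for $C\hat{\mathcal V}_vC^\top$: either invoke it as part of the regularity assumptions, or argue genericity. For the genericity argument, I would note that a singular covariance forces a nontrivial linear combination $\lambda^\top R_v$ to vanish almost surely, and that this fails for generic parameters, for example Gaussian errors with generic $\Omega$. I would then adopt positive definiteness at the truth as an assumption parallel to invertibility of $A_v$. Under this, $C\mathcal{V}_vC^\top$ is positive definite for any full-row-rank $C$, and $C\hat{\mathcal{V}}_vC^\top$ is invertible with probability tending to one.

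\textbf{Concluding the three parts.} For (i), set $T_n=\sqrt n\,\hat{\mathcal{V}}_v^{-1/2}(\hat\beta_v-\beta_v)$. By Slutsky's lemma $T_n\xrightarrow{d}\mathcal{N}(0,I_k)$, so $\|T_n\|^2\xrightarrow{d}\chi^2_k$ by continuous mapping. Since $\chi^2_k$ is continuous, $P(\beta_v\in\mathcal C_n)=P(\|T_n\|^2\le\chi^2_{k,1-\alpha})\to1-\alpha$. For (ii), take the $j$th coordinate: $\sqrt n(\hat\beta_{p_jv}-\beta_{p_jv})/\sqrt{\hat{\mathcal{V}}_v[j,j]}\xrightarrow{d}\mathcal{N}(0,1)$, which uses $\mathcal{V}_v[j,j]>0$ as implied by positive definiteness, and the coverage follows from symmetry of the normal distribution. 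For (iii), under $H_0$ the quantity $\sqrt n(C\hat\beta_v-c)=C\sqrt n(\hat\beta_v-\beta_v)$ converges in distribution to $\mathcal{N}(0,C\mathcal{V}_vC^\top)$. Since $C\hat{\mathcal{V}}_vC^\top\to C\mathcal{V}_vC^\top$ in probability, the same standardization argument as in (i) gives $W_n\xrightarrow{d}\chi^2_r$.
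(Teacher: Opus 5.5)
Your proposal is correct and follows essentially the same route as the paper: consistency of $\hat{\mathcal{V}}_v$ via the law of large numbers, then Slutsky's lemma and the continuous mapping theorem applied to the normal limit of Proposition~\ref{prop:recursive_variance} for each of (i)--(iii). Your version is somewhat more careful than the paper's in two places. You handle the estimated residuals inside $\hat{\mathcal{V}}_v$ through consistency of the earlier-stage $\hat\beta_u$. You also state explicitly that positive definiteness of $\mathcal{V}_v$ is an extra regularity requirement, already needed for (i) and (ii), which the paper only asserts holds generically.
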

For the marginal test $H_0\colon\beta_{p_jv}=0$, the statistic reduces to $W_n=n\hat\beta_{p_jv}^2/\hat{\mathcal{V}}_v[j,j]\xrightarrow{d}\chi^2_1$; equivalently, $\hat\beta_{p_jv}/\sqrt{\hat{\mathcal{V}}_v[j,j]/n}$ is asymptotically standard normal.

\subsection{R~Implementation}

We implement the HTC estimator, asymptotic variance estimator, and inference in the \texttt{R} function \texttt{htcfit()}, with inferential output produced by \texttt{summary(htcfit())}.\footnote{Source code at \url{https://doi.org/10.5281/zenodo.21649507}; HTC order $\prec$ and witness sets are either determined via the \texttt{SEMID} package \citep{SEMID} or user-specified.} The output contains multiple blocks corresponding to each 
non-root node $v$ in the HTC order $\prec$. Each block is structured as follows.
\begin{description}[itemsep=3pt, parsep=0pt, leftmargin=1.5em, font=\normalfont\itshape]
\item[Block header.] Identifies $v$, its parents $\mathrm{pa}(v)$, and the witness set $Y_v$ with type labels \texttt{ext} (external witness, instrument $Z_y=X_y$) and \texttt{int} (internal witness, instrument $Z_y=\varepsilon_y$).
\item[Coefficient table.] For each edge $p_j\to v$: Reports the estimate $\hat\beta_{p_jv}$, the standard error $\smash[b]{\sqrt{\hat{\mathcal{V}}_v[j,j]/n}}$ from Proposition~\ref{prop:inference}(ii), the $z$-statistic $\hat\beta_{p_jv} / \smash[b]{\sqrt{\hat{\mathcal{V}}_v[j,j]/n}}$ which is asymptotically $N(0,1)$ distributed under $H_0\colon\beta_{p_jv}=0$, and a corresponding two-sided $p$-value.
\item[Footer.] Presents the empirical residual standard deviation $\hat\Omega_{vv}^{1/2} $, the structural $R^2 := 1 - \hat\Omega_{vv}/\mathrm{Var}(X_v)$, the fraction of $\mathrm{Var}(X_v)$ explained by $X_{\mathrm{pa}(v)}$
, and, when $|\mathrm{pa}(v)|>1$, the joint Wald statistic $n\hat\beta_v^\top \hat{\mathcal{V}}_v^{-1} \hat\beta_v \sim \chi^2_{|\mathrm{pa}(v)|}$ under $H_0\colon\beta_v = 0$ from Proposition~\ref{prop:inference}(iii).
\end{description}
\begin{example}
Figure~\ref{fig:summary_R} shows \texttt{summary(htcfit())} applied to the cyclic five-node graph on the left, with $n=1000$ Gaussian observations at the true parameters listed in Appendix~\ref{app:summary}.
The figure displays three of the four blocks in HTC order ($3\prec 5\prec 2 \prec 4$, node~$4$ truncated); nodes~$3$ and~$5$ each have a single parent, so their footers carry only the structural $R^2$, while node~$2$ adds the joint Wald $\chi^2$.
The block for node~$2$, which has two parents $\mathrm{pa}(2)=\{1,3\}$ identified by internal witnesses $Y_2=\{3,5\}\subset\mathrm{htr}(2)$, reports a joint Wald test that strongly rejects $H_0:\beta_{12}=\beta_{32}=0$.
The block for node~$5$ does not reject $H_0:\beta_{45}=0$, consistent with the true value $\beta_{45}=0$.
\end{example}

\begin{figure}
\begin{minipage}{0.2\textwidth}
\begin{tikzpicture}[scale = 0.7, >=Stealth, node style/.style={scale = 1, font=\small}]
\node[node style, draw, circle] (A) at (0,3) {$1$};
\node[node style, draw, circle] (B) at (0,6) {$2$};
\node[node style, draw, circle] (C) at (3,4.5) {$3$};
\node[node style, draw, circle] (D) at (3,1.5) {$4$};
\node[node style, draw, circle] (E) at (0,0) {$5$};
\node[black, scale = 1] (F) at (1.6,4.5) {$0.4$};
\node[black, scale = 1] (G) at (1.6,5.85) {$0.7$};
\node[black, scale = 1] (H) at (0.6,4) {$0.8$};
\node[black, scale = 1] (I) at (1.6,1.15) {$0$};

\def\dashlength{3pt}
\draw[blue, ->] (A) -- (B);
\draw[blue, ->] (B) to [bend left=10] (C);
\draw[blue, ->] (C) -- (D);
\draw[blue, ->] (D) -- (E);
\draw[blue, ->] (C) to [bend left=10] (B);
\draw[red, <->] (A) to [bend left=30] (B);
\draw[red, <->] (A) -- (D);
\draw[red, <->] (A) -- (E);
\draw[red, <->] (C) to [bend left=30] (D);
\draw[red, <->] (E) to [bend right=30] (D);
\end{tikzpicture}\end{minipage} \hfill
\begin{minipage}{0.75\textwidth}
\begin{Verbatim}[
  fontsize=\scriptsize,
  baselinestretch=0.8,
  xleftmargin=0pt,
  xrightmargin=0pt
]
HTC Estimator Summary
=======================================================
Node 3  (pa: 2)  [witnesses: 1 (ext)]

          Estimate  Std. Error   z value  Pr(>|z|)  
2 -> 3      0.7288      0.0163   44.6505     < 2e-16 ***
---
Signif. codes: 0 '***' 0.001 '**' 0.01 '*' 0.05 '.' 0.1 ' ' 1

Residual std. dev: 0.9433   Structural R-sq: 0.7946   (n = 1000)
-------------------------------------------------------
Node 5  (pa: 4)  [witnesses: 3 (int)]

          Estimate  Std. Error   z value  Pr(>|z|)  
4 -> 5     -0.0156      0.0137   -1.1380       0.255  
---
Signif. codes: 0 '***' 0.001 '**' 0.01 '*' 0.05 '.' 0.1 ' ' 1

Residual std. dev: 1.0820   Structural R-sq: -0.0268   (n = 1000)
-------------------------------------------------------
Node 2  (pa: 1, 3)  [witnesses: 3 (int), 5 (int)]

          Estimate  Std. Error   z value  Pr(>|z|)  
1 -> 2      0.7946      0.0411   19.3338     < 2e-16 ***
3 -> 2      0.3893      0.0254   15.3408     < 2e-16 ***
---
Signif. codes: 0 '***' 0.001 '**' 0.01 '*' 0.05 '.' 0.1 ' ' 1

Residual std. dev: 1.0223   Structural R-sq: 0.7971   (n = 1000)
Joint Wald test (H0: beta_2 = 0):  chi-sq = 729.088 on 2 df,  
p-value < 2e-16
-------------------------------------------------------
...
\end{Verbatim}
\end{minipage}
\caption{\texttt{summary(htcfit())} output for the cyclic five-node directed mixed graph (left, $n=1000$, true $\beta_{45}=0$); see Appendix~\ref{app:summary} for graph and parameter details.}
\label{fig:summary_R}
\end{figure}

\begin{figure}
\centering
\begin{minipage}{0.2\textwidth}
\begin{tikzpicture}[scale = 0.7, >=Stealth, node style/.style={scale = 1, font=\small}]
\node[node style, draw] (A) at (0,3) {supply};
\node[node style, draw] (B) at (-1,6) {wave2};
\node[node style, draw] (C) at (1,6) {wave3};
\node[node style, draw] (E) at (0,0) {demand};
\node[black, scale = 1] (I) at (0.9,1.5) {$\delta$};

\def\dashlength{3pt}
\draw[blue, ->] (B) -- (A);
\draw[blue, ->] (C) -- (A);
\draw[red, <->] (A) to [bend right=30] (E);
\draw[red, <->] (B.north) to [bend left=60] (C.north);
\draw[blue, <-] (A) -- (E);
\draw[blue, ->] (A) to [bend left=30]  (E);
\end{tikzpicture}\end{minipage} \hfill
\begin{minipage}{0.75\textwidth}
\begin{Verbatim}[
  fontsize=\scriptsize,
  baselinestretch=0.8,
  xleftmargin=0pt,
  xrightmargin=0pt
]
HTC Estimator Summary
=======================================================
Node demand  (pa: supply)  [witnesses: wave2 (ext)]

                    Estimate  Std. Error   z value  Pr(>|z|)  
supply -> demand     -0.8410      0.3827   -2.1976       0.028 *
---
Signif. codes: 0 '***' 0.001 '**' 0.01 '*' 0.05 '.' 0.1 ' ' 1

Residual std. dev: 0.6850   Structural R-sq: 0.0566   (n = 97)
-------------------------------------------------------

HTC Estimator Summary
=======================================================
Node demand  (pa: supply)  [witnesses: wave3 (ext)]

                    Estimate  Std. Error   z value  Pr(>|z|)  
supply -> demand     -0.7611      0.4246   -1.7926       0.073 .
---
Signif. codes: 0 '***' 0.001 '**' 0.01 '*' 0.05 '.' 0.1 ' ' 1

Residual std. dev: 0.6798   Structural R-sq: 0.0708   (n = 97)
-------------------------------------------------------
\end{Verbatim}
\end{minipage}
\caption{\texttt{summary(htcfit())} applied to the \textit{Fulton Fish Market} dataset ($n=97$): cyclic directed mixed graph with correlated (\textit{supply}, \textit{demand}) errors (left), estimated separately with \textit{wave2} and \textit{wave3} as instruments for the demand elasticity~$\delta$.}
\label{fig:fish}
\end{figure}
\subsection{The Fulton Fish Market}
The Fulton Fish Market in New York City offers a textbook illustration of simultaneous-equations identification.
\citet{Angrist2000} analyze $n=97$ daily records of wholesale whiting transactions collected by \citet{Graddy1995}: log price (\textit{supply}), log quantity sold (\textit{demand}), as well as two- and three-day moving average wave height (\textit{wave2}, \textit{wave3}) off the Long Island coast.
Supply and demand are jointly determined, and the two structural equations share correlated errors from omitted common factors such as seasonal stock variation and port conditions.
The directed mixed graph in Figure~\ref{fig:fish} (left) captures this structure: directed edges \textit{supply}\,$\to$\,\textit{demand} and \textit{demand}\,$\to$\,\textit{supply} form the simultaneous-equations cycle, and a bidirected edge \textit{supply}\,$\leftrightarrow$\,\textit{demand} represents the correlated errors.
Rough seas reduce landings and raise prices without directly affecting consumer demand, making wave height a natural external instrument for the supply equation.
The target parameter is $\delta := \beta_{supply, \, demand}$, the demand elasticity. Naive OLS yields $\hat\delta_{\mathrm{OLS}}=-0.525$.

The HTC identifies $\delta$ using \textit{wave2} as the single external witness $Y_{\textit{demand}}=\{\textit{wave2}\}$.
Since \textit{wave2} is an external witness, no residual plug-in is required and the instrument is the raw variable $X_\textit{wave2}$.
Figure~\ref{fig:fish} (right) shows $\hat\delta=-0.8410$ (SE $0.3827$, $z=-2.1976$, $p=0.028$), a statistically significant negative demand elasticity directly comparable to the 
estimate of $-1.01$ (SE $0.42$) from \citet{Angrist2000}.
The estimates agree within one standard error, and the HTC asymptotic standard error is distribution-free, requiring only finite fourth moments.
The additional day of averaging in \textit{wave3} dilutes the sharp day-before supply shock and reduces the first-stage correlation with \textit{supply} from $0.4931$ to $0.3798$, making it a slightly weaker instrument for $\delta$: Figure~\ref{fig:fish} shows $\hat\delta=-0.7611$ (SE $0.4246$, $p=0.073$), within one standard error of the \textit{wave2} result.
Further details on data preprocessing and identification can be found in Appendix~\ref{app:fish}.

\section{Conclusion}
We have derived the semiparametric influence function $\phi_{\beta_v}=A_v^{-1}R_v$ of the HTC estimator for every HTC-identified node in an arbitrary directed mixed graph (Theorem~\ref{thm:htc_if}).
The theory covers causal structures beyond the classical IV setting: multiple parents, external and internal (residual-based) instruments, latent confounding, and cyclic graphs.
Under standard regularity conditions and finite fourth moments, the asymptotic variance estimator $\hat{\mathcal{V}}_v$ is computable in closed form by descent through the HTC order $\prec$ and is well calibrated across Gaussian and non-Gaussian error distributions. Applied to the Fulton Fish Market, the theory delivers a complete inferential summary for the demand elasticity in a simultaneous-equations model with correlated errors.

Two open problems emerge from this work.
First, the HTC algorithm as implemented in the \texttt{SEMID} package outputs one valid witness set and HTC ordering $\prec$ without variance guarantees, while Figure~\ref{fig:efficiency} shows this choice has a first-order effect on variance; a polynomial-time selection rule for a lower-variance $Y_v$ or $\prec$ has yet to be found.
Second, the HTC estimator is semiparametrically efficient in the classical single-instrument IV case, where $\phi_{\beta_{pv}} = X_y\varepsilon_v / \E{X_yX_p}$ coincides with the efficient influence function; for which directed mixed graphs efficiency holds within the HTC class has yet to be characterized.
Resolving these two problems would connect the graphical identification theory of \citet{Foygel2012} to a complete theory of variance-optimal HTC estimation.


\section*{Acknowledgments}

Leopold Mareis received funding from the German Research Foundation (DFG) under the Mathematical Research Data Initiative (project No. 460135501). 
Mathias Drton received funding from the European Research Council (ERC) under the European Union’s Horizon 2020 research and innovation programme (grant agreement No. 883818). 
We thank Benjamin Hollering for valuable discussions concerning marginal independence models.


\newpage
\bibliography{./bibliography}

\newpage
\appendix
\section{Assumptions}
\label{asm:regularity}

The following conditions hold at the true parameter $(\beta_0, \varepsilon_0) \in \mathcal{B} \times \mathcal{E}$.
\begin{enumerate}[label=(\roman*), itemsep=3pt, parsep=0pt, leftmargin=2em]
  \item \textit{(Hellinger differentiability.)} \label{asm:1}
        Every parametric path $\gamma\mapsto P_\gamma$ in $\mathcal{M}_{G}$ through
        $P_0$ is differentiable in quadratic mean at $\gamma=0$
        \citep[Definition~25.14]{Vaart_1998}: there exists a score $S\in L^2_0$ satisfying
        \[
          \int\!\left(\frac{dP_\gamma^{1/2} - dP_0^{1/2}}{\gamma}
                     -\frac{1}{2}S\,dP_0^{1/2}\right)^{\!2}
          \;\to\; 0
          \quad\text{as }\gamma\to 0.
        \]
        This is the standard regularity condition for RAL theory.
  \item \textit{(Non-degeneracy.)} \label{asm:2}
        The HTC matrix $A_v$ is invertible for every HTC-identified node $v\in V$.
        By \cite[Lemma~2]{Foygel2012}, $\det(A_v)$ is a nonzero polynomial in $(\beta,\Omega)$, so invertibility holds for all $(\beta,\Omega)$ outside a proper algebraic subset of the parameter space.
\end{enumerate}

\newpage
\section{Proofs} \label{appendix_proofs}
 \restatement{Lemma \ref{lem:htc_matrix}}{\TextLemmaHTCRelevanceMatrix}
\begin{proof}
Write $\Sigma:=\E{XX^\top}=(I-\beta)^{-\top}\Omega(I-\beta)^{-1}$.

\textbf{Case $y_i\in\mathrm{htr}(v)$.}
From $X=\beta^\top X+\varepsilon$, we have $Z_{y_i}=\varepsilon_{y_i}=(I-\beta^\top)_{y_i\cdot}X$, so $(A_v)_{ij}=\E{Z_{y_i}X_{p_j}}=[(I-\beta)^\top\Sigma]_{y_ip_j}$, a rational function of $(\beta,\Omega)$.

\textbf{Case $y_i\notin\mathrm{htr}(v)$.}
$Z_{y_i}=X_{y_i}$, so $(A_v)_{ij}=\Sigma_{y_ip_j}$, a rational function of $(\beta,\Omega)$.

Hence $\det(A_v)$ is rational in $(\beta,\Omega)$, not identically zero by \cite[Lemma~2]{Foygel2012}, so $A_v$ is invertible outside the proper algebraic subset $\{\det(A_v)=0\}\subset\mathcal{B}\times \mathrm{PD}(V)$.
\end{proof}

\restatement{Lemma \ref{lem:row_if}}{\TextRowMomentDerivative}
\begin{proof}
We prove the result by induction on the HTC ordering $\prec$.
The base nodes satisfy $Y_v\cap\mathrm{htr}(v)=\emptyset$, so all witnesses are external and the internal case does not arise; the base case therefore reduces to the external case below.
For any $a,b\in V$, differentiating $\Sigma_{ab}=\E{X_aX_b}$ along a parametric submodel with score $S$ gives $\tfrac{d}{d\gamma}\big|_0\Sigma_{ab,\gamma}=\E{(X_aX_b-\Sigma_{ab})S}$, so $\phi_{\Sigma_{ab}}=X_aX_b-\Sigma_{ab}$ is a valid mean-zero influence function for $\Sigma_{ab}$ in $\mathcal{M}_G$.

\textbf{Case $y\notin\mathrm{htr}(v)$.}
$M_y(t)=\E{X_yX_t}=\Sigma_{yt}$, so $\phi_{M_y(t)}=\phi_{\Sigma_{yt}}=X_yX_t-\Sigma_{yt}$.
This is mean-zero and satisfies $\E{\phi_{M_y(t)}S}=\tfrac{d}{d\gamma}\big|_0 M_{y,\gamma}(t)$ for all submodel scores $S$, hence it is a valid influence function for $M_y(t)$.

\textbf{Case $y\in\mathrm{htr}(v)$.}
Since $y\prec v$ in HTC order, the influence functions $\phi_{\beta_{qy}}$ for $q\in\mathrm{pa}(y)$ are available as valid mean-zero influence functions by the induction hypothesis.
Substituting $Z_y=X_y-\sum_{q\in\mathrm{pa}(y)}\beta_{qy}X_q$ into $M_y(t)=\E{Z_yX_t}$ gives
\[
  M_y(t) = \E{\varepsilon_yX_t} = \Sigma_{yt}-\sum_{q\in\mathrm{pa}(y)}\beta_{qy}\Sigma_{qt}.
\]
Differentiating $M_{y,\gamma}(t)=\Sigma_{yt,\gamma}-\sum_q\beta_{qy,\gamma}\Sigma_{qt,\gamma}$ along the submodel using the product rule for pathwise derivatives ($\phi_{fg}=\phi_f g_0+f_0\phi_g$, a direct consequence of the bilinearity of differentiation; see \citet[\S 25.7]{Vaart_1998}) and substituting $\phi_{\Sigma_{ab}}=X_aX_b-\Sigma_{ab}$:
\[
  \phi_{M_y(t)}
  =
  (X_yX_t-\Sigma_{yt})
  -\sum_{q\in\mathrm{pa}(y)}\phi_{\beta_{qy}}\,\Sigma_{qt}
  -\sum_{q\in\mathrm{pa}(y)}\beta_{qy}(X_qX_t-\Sigma_{qt}).
\]
Collecting terms: the leading random parts combine as $X_yX_t-\sum_q\beta_{qy}X_qX_t=\varepsilon_yX_t$, the constant parts reduce to $-\Sigma_{yt}+\sum_q\beta_{qy}\Sigma_{qt}=-\E{\varepsilon_yX_t}$, and the remaining terms yield $-\sum_{q\in\mathrm{pa}(y)}\Sigma_{qt}\,\phi_{\beta_{qy}}$, giving
\[
  \phi_{M_y(t)}=\varepsilon_yX_t-\E{\varepsilon_yX_t}-\sum_{q\in\mathrm{pa}(y)}\Sigma_{qt}\,\phi_{\beta_{qy}}.
\]
This formula is mean-zero: the first two terms are centered by construction, and each $\Sigma_{qt}\phi_{\beta_{qy}}$ is mean-zero because $\phi_{\beta_{qy}}$ is mean-zero by the induction hypothesis.
Since the formula was derived by differentiating $M_{y,\gamma}(t)$ along an arbitrary parametric submodel, it satisfies $\E{\phi_{M_y(t)}S}=\tfrac{d}{d\gamma}\big|_0 M_{y,\gamma}(t)$ for all submodel scores $S$; hence it is a valid influence function for $M_y(t)$.
\end{proof}

\restatement{Lemma \ref{lem:explicit_row}}{\TexRowContribution}
\begin{proof}
We substitute the formulas from Lemma~\ref{lem:row_if} into $R_{y,v}:=\phi_{M_y(v)}-\sum_{p\in\mathrm{pa}(v)}\beta_{pv}\phi_{M_y(p)}$ and simplify case by case.
Since each $\phi_{M_y(t)}$ is a valid mean-zero influence function by Lemma~\ref{lem:row_if}, and $R_{y,v}$ is a linear combination of these with fixed scalar coefficients $\beta_{pv}$, the row contribution $R_{y,v}$ is itself a valid mean-zero influence function for the row residual $M_y(v)-\sum_p\beta_{pv}M_y(p)$.

\textbf{Case $y\notin\mathrm{htr}(v)$.}
Lemma~\ref{lem:row_if} gives $\phi_{M_y(t)}=X_yX_t-\Sigma_{yt}$, so
\begin{align*}
  R_{y,v}
  &=
  X_yX_v-\Sigma_{yv}
  -\sum_{p\in\mathrm{pa}(v)}\beta_{pv}(X_yX_p-\Sigma_{yp}) \\
  &=
  X_y\!\Bigl(X_v-\sum_{p}\beta_{pv}X_p\Bigr)
  -\E{X_y\!\Bigl(X_v-\sum_{p}\beta_{pv}X_p\Bigr)}
  =
  X_y\varepsilon_v - \E{X_y\varepsilon_v}.
\end{align*}
The centering term vanishes: since $X=(I-\beta)^{-\top}\varepsilon$, we have $\E{X_y\varepsilon_v}=[(I-\beta)^{-\top}\Omega]_{yv}=\sum_k[(I-\beta)^{-1}]_{ky}\,\Omega_{kv}$.
Each summand is nonzero only when a directed path runs from $k$ to $y$ in $G$ \citep[Equation~2.2]{Foygel2012} and $\Omega_{kv}\neq 0$, i.e., $k\in\{v\}\cup\mathrm{sib}(v)$; such a pair constitutes a half-trek from $v$ to $y$, so every summand vanishes when $y\notin\mathrm{htr}(v)$.
Hence $R_{y,v}=X_y\varepsilon_v$, which is mean-zero.

\textbf{Case $y\in\mathrm{htr}(v)$.}
Lemma~\ref{lem:row_if} gives $\phi_{M_y(t)}=\varepsilon_yX_t-\E{\varepsilon_yX_t}-\sum_q\Sigma_{qt}\phi_{\beta_{qy}}$; substituting and expanding,
\begin{align*}
  R_{y,v}
  &=
  \varepsilon_yX_v - \E{\varepsilon_yX_v}
  -\sum_{q\in\mathrm{pa}(y)}\Sigma_{qv}\phi_{\beta_{qy}} \\
  &\quad
  -\sum_{p\in\mathrm{pa}(v)}\beta_{pv}\Bigl(
    \varepsilon_yX_p - \E{\varepsilon_yX_p}
    -\sum_{q\in\mathrm{pa}(y)}\Sigma_{qp}\phi_{\beta_{qy}}
  \Bigr).
\end{align*}
The coefficient of $\phi_{\beta_{qy}}$ in the expansion is $-\Sigma_{qv}+\sum_p\beta_{pv}\Sigma_{qp}$; since $$\E{X_q\varepsilon_v}=\E{X_q(X_v-\sum_p\beta_{pv}X_p)}=\Sigma_{qv}-\sum_p\beta_{pv}\Sigma_{qp},$$this equals $-\E{X_q\varepsilon_v}$.
The leading random terms give $\varepsilon_y(X_v-\sum_p\beta_{pv}X_p)=\varepsilon_y\varepsilon_v$; the centering terms give $-\E{\varepsilon_y\varepsilon_v}$, which equals zero because the HTC requires $Y_v\cap(\{v\}\cup\mathrm{sib}(v))=\emptyset$ (Definition~\ref{def:htc}), so $y\notin\mathrm{sib}(v)$, and the marginal independence model (Definition~\ref{def:parameter_space}) then gives $\varepsilon_y\perp\!\!\!\perp\varepsilon_v$, hence $\E{\varepsilon_y\varepsilon_v}=\Omega_{yv}=0$.
Assembling these three collections gives $R_{y,v}=\varepsilon_y\varepsilon_v-\sum_{q\in\mathrm{pa}(y)}\E{X_q\varepsilon_v}\,\phi_{\beta_{qy}}$, which is mean-zero: $\E{\varepsilon_y\varepsilon_v}=0$ by the argument above, and each $\phi_{\beta_{qy}}$ is mean-zero by Lemma~\ref{lem:row_if}.
\end{proof}

\restatement{Theorem \ref{thm:htc_if}}{\TextHTCInfluenceFunction}
\begin{proof}
We show that $\phi_{\beta_v}=A_v^{-1}R_v$ satisfies the three properties required of an influence function: it is mean-zero, square-integrable, and satisfies $\E{\phi_{\beta_v}S}=\dot\beta_v$ for every parametric submodel with score $S$ and $\dot\beta_v=\tfrac{d}{d\gamma}\big|_0\beta_{v,\gamma}$.

\textbf{Mean-zero and square-integrability.}
Since $R_{y,v}$ is mean-zero by Lemma~\ref{lem:explicit_row} for every $y\in Y_v$, we have $\E{R_v}=0$ and hence $\E{\phi_{\beta_v}}=0$.
Square-integrability $\phi_{\beta_v}\in L^2_0$ follows from $\varepsilon \in L^4_0$, which guarantees $R_{y,v}\in L^2_0$ for all $y,v$, and hence $\phi_{\beta_v}=A_v^{-1}R_v\in L^2_0$.

\textbf{Formula.}
Since $A_v$ is invertible at $P_0$ by Assumption \ref{asm:regularity}.\ref{asm:2}, the map $\beta_{v,\gamma}=A_{v,\gamma}^{-1}b_{v,\gamma}$ is differentiable in a neighbourhood of $P_0$, and the chain rule gives $\phi_{\beta_v}=A_v^{-1}\{\phi_{b_v}-(dA_v)\beta_v\}$.
The $i$th component of $\phi_{b_v}-(dA_v)\beta_v$ equals $\phi_{M_{y_i}(v)}-\sum_{p\in\mathrm{pa}(v)}\beta_{pv}\phi_{M_{y_i}(p)}=R_{y_i,v}$ by Lemma~\ref{lem:explicit_row}, so stacking over $i$ gives $\phi_{\beta_v}=A_v^{-1}R_v$.

\textbf{Score equation.}
Let $\gamma\mapsto P_\gamma$ be any parametric submodel in $\mathcal{M}_G$ with score $S$ and $\dot\beta_v=\tfrac{d}{d\gamma}\big|_0\beta_{v,\gamma}$.
We prove $\E{\phi_{\beta_v}S}=\dot\beta_v$ by induction on the HTC ordering $\prec$.

\emph{Base case.}
The base nodes satisfy $Y_v\cap\mathrm{htr}(v)=\emptyset$, so all witnesses are external, $Z_y=X_y$ is fixed, and Lemma~\ref{lem:explicit_row} gives $R_{y,v}=X_y\varepsilon_v$ with no correction terms.
The identification equation $\mathrm{E}_{P_\gamma}[X_{Y_v}\varepsilon_{v,\gamma}]=0$ holds for all $P_\gamma\in\mathcal{M}_G$; differentiating at $\gamma=0$ gives $\E{X_y\varepsilon_vS}+\E{X_yX_{\mathrm{pa}(v)}^\top}\dot\beta_v \cdot(-1)=0$, hence $\E{X_y\varepsilon_vS}=(A_v)_{y,\cdot}\dot\beta_v$ for each $y$.
Stacking gives $\E{R_vS}=A_v\dot\beta_v$ and therefore $\E{\phi_{\beta_v}S}=\dot\beta_v$.

\emph{Induction hypothesis.}
Assume $\E{\phi_{\beta_{qy}}S}=\dot\beta_{qy}$ for all $q\in\mathrm{pa}(y)$, $y\prec v$.

\emph{Inductive step.}
The identification equation $\mathrm{E}_{P_\gamma}[Z_{Y_v}(\beta_{\prec v,\gamma})\varepsilon_{v,\gamma}]=0$ holds at every $P_\gamma\in\mathcal{M}_G$, since $Y_v$ is a valid HTC witness set.
Differentiating at $\gamma=0$ via the product rule gives, for each witness $y\in Y_v$,
\[
  \E{\dot Z_y\,\varepsilon_v}+\E{Z_y\,\dot\varepsilon_v}+\E{Z_y\varepsilon_v S}=0.
\]

For $y\notin\mathrm{htr}(v)$: $Z_y=X_y$ is fixed, so $\dot Z_y=0$.
The structural equation $\varepsilon_v=X_v-\beta_v^\top X_{\mathrm{pa}(v)}$ gives $\dot\varepsilon_v=-X_{\mathrm{pa}(v)}^\top\dot\beta_v$, so $\E{Z_y\dot\varepsilon_v}=-(A_v)_{y,\cdot}\dot\beta_v$.
Rearranging: $\E{Z_y\varepsilon_vS}=(A_v)_{y,\cdot}\dot\beta_v$.

For $y\in\mathrm{htr}(v)$: $Z_y=X_y-\beta_y^\top X_{\mathrm{pa}(y)}$, so $\dot Z_y=-X_{\mathrm{pa}(y)}^\top\dot\beta_y$, giving the pathwise derivative $\E{\dot Z_y\varepsilon_v}=-\sum_{q\in\mathrm{pa}(y)}\E{X_q\varepsilon_v}\dot\beta_{qy}$.
As before, $\E{Z_y\dot\varepsilon_v}=-(A_v)_{y,\cdot}\dot\beta_v$.
Rearranging:
\[
  \E{Z_y\varepsilon_vS}=(A_v)_{y,\cdot}\dot\beta_v+\sum_{q\in\mathrm{pa}(y)}\E{X_q\varepsilon_v}\dot\beta_{qy}.
\]

For each internal witness, applying the induction hypothesis $\E{\phi_{\beta_{qy}}S}=\dot\beta_{qy}$ to the explicit formula $R_{y,v}=\varepsilon_y\varepsilon_v-\sum_q\E{X_q\varepsilon_v}\phi_{\beta_{qy}}$ from Lemma~\ref{lem:explicit_row} gives $\E{R_{y,v}S}=\E{Z_y\varepsilon_vS}-\sum_q\E{X_q\varepsilon_v}\dot\beta_{qy}$.
Substituting the differentiation result above, the correction terms cancel:
\[
  \E{R_{y,v}S}
  =\Bigl((A_v)_{y,\cdot}\dot\beta_v+\textstyle\sum_q\E{X_q\varepsilon_v}\dot\beta_{qy}\Bigr)
   -\textstyle\sum_q\E{X_q\varepsilon_v}\dot\beta_{qy}
  =(A_v)_{y,\cdot}\dot\beta_v.
\]
For external witnesses, $\E{R_{y,v}S}=\E{X_y\varepsilon_vS}=(A_v)_{y,\cdot}\dot\beta_v$ directly.
Stacking over all $y\in Y_v$ gives $\E{R_vS}=A_v\dot\beta_v$, and hence $\E{\phi_{\beta_v}S}=A_v^{-1}A_v\dot\beta_v=\dot\beta_v$.

In particular, for submodels varying only $\varepsilon \in \mathcal{E}$ with $\beta$ held fixed, $\dot\beta_y = 0$ for all $y \preceq v$, so the conclusion above gives $\mathbb{E}[\phi_{\beta_v} S] = 0$, confirming orthogonality to the nuisance tangent space of $\mathcal{M}_G$.
\end{proof}

\restatement{Proposition \ref{prop:recursive_variance}}{\TextPropRecursiveVariance}
\begin{proof}
The proof has two parts: we first establish asymptotic normality via the multi-stage Z-estimator expansion, then derive the three covariance formulas by substituting Lemma~\ref{lem:explicit_row} into $\E{R_{a,v}R_{b,v}}$.

At each node $v$ in HTC order, $\hat\beta_v$ solves $n^{-1}\sum_i Z_{Y_v}(\hat\beta_{\prec v})\varepsilon_v(X^{(i)},\hat\beta_v)=0$, where the instruments $Z_y=X_y-\hat\beta_y^\top X_{\mathrm{pa}(y)}$ for $y\in\mathrm{htr}(v)$ depend on the earlier-stage estimates $\hat\beta_{\prec v}$.
Applying Z-estimator theory stage by stage \citep[\S 5]{Vaart_1998}, the asymptotic expansion is $\sqrt{n}(\hat\beta_v-\beta_v)=n^{-1/2}\sum_i\phi_{\beta_v}(X^{(i)})+o_P(1)$, where the correction terms $\phi_{\beta_{qy}}$ in $R_v$ propagate the estimation uncertainty from $\hat\beta_{\prec v}$ into the instruments, matching exactly the influence function $\phi_{\beta_v}=A_v^{-1}R_v$ of Theorem~\ref{thm:htc_if}.
Since $\phi_{\beta_v}$ is mean-zero and $L^2_0$ by Theorem~\ref{thm:htc_if}, the central limit theorem gives $\sqrt{n}(\hat\beta_v-\beta_v)\xrightarrow{d}\mathcal{N}(0,\E{\phi_{\beta_v}\phi_{\beta_v}^\top})$.
Since $A_v=\E{Z_{Y_v}X_{\mathrm{pa}(v)}^\top}$ is a non-random population moment at $P_0$, substituting $\phi_{\beta_v}=A_v^{-1}R_v$ gives $\E{\phi_{\beta_v}\phi_{\beta_v}^\top}=A_v^{-1}\E{R_vR_v^\top}A_v^{-\top}=\mathcal{V}_v$.

It remains to derive the three formulas for $\E{R_{a,v}R_{b,v}}$.
In each case we substitute the formulas from Lemma~\ref{lem:explicit_row} and expand; the scalars $\E{X_q\varepsilon_v}$ are non-random population moments and factor out of joint expectations.
For $a,b\notin\mathrm{htr}(v)$: Lemma~\ref{lem:explicit_row} gives $R_{a,v}=X_a\varepsilon_v$ and $R_{b,v}=X_b\varepsilon_v$, so $\E{R_{a,v}R_{b,v}}=\E{X_aX_b\varepsilon_v^2}$.
For $a\in\mathrm{htr}(v)$, $b\notin\mathrm{htr}(v)$: Lemma~\ref{lem:explicit_row} gives $R_{a,v}=\varepsilon_a\varepsilon_v-\sum_{q\in\mathrm{pa}(a)}\E{X_q\varepsilon_v}\phi_{\beta_{qa}}$ and $R_{b,v}=X_b\varepsilon_v$; multiplying and factoring out the non-random scalar $\E{X_q\varepsilon_v}$:
\[
  \E{R_{a,v}R_{b,v}}=\E{\varepsilon_aX_b\varepsilon_v^2}-\sum_{q\in\mathrm{pa}(a)}\E{X_q\varepsilon_v}\,\E{\phi_{\beta_{qa}}X_b\varepsilon_v}.
\]
For $a,b\in\mathrm{htr}(v)$: both factors carry correction terms; substituting Lemma~\ref{lem:explicit_row} gives
\[
  R_{a,v}R_{b,v}=\Bigl(\varepsilon_a\varepsilon_v-\sum_{q\in\mathrm{pa}(a)}\E{X_q\varepsilon_v}\phi_{\beta_{qa}}\Bigr)\Bigl(\varepsilon_b\varepsilon_v-\sum_{r\in\mathrm{pa}(b)}\E{X_r\varepsilon_v}\phi_{\beta_{rb}}\Bigr).
\]
Expanding bilinearly and factoring out the non-random scalars $\E{X_q\varepsilon_v}$ and $\E{X_r\varepsilon_v}$ from each cross-term yields the four-term formula in the proposition.
\end{proof}

\restatement{Lemma \ref{lem:variance_recursion}}{\TextLemmaVariancerecursion}
\begin{proof}
The proof establishes the three claims in the lemma in order: the reduction formula (Claim 1), the case expansion (Claim 2), and termination of the recursion (Claim 3), from which computability follows.

\textbf{Claim 1 — Reduction formula.}
By Theorem~\ref{thm:htc_if}, the vector influence function for $\beta_a$ is $\phi_{\beta_a}=A_a^{-1}R_a\in\mathbb{R}^{|\mathrm{pa}(a)|}$; the scalar component for the edge $q=p_j\to a$ is its $j$th entry, $\phi_{\beta_{qa}}=e_j^\top A_a^{-1}R_a$.
Since $A_a=\E{Z_{Y_a}X_{\mathrm{pa}(a)}^\top}$ is a non-random population moment at $P_0$, it factors outside the expectation:
\[
  \E{\phi_{\beta_{qa}}X_b\varepsilon_v}
  =\E{e_j^\top A_a^{-1}R_aX_b\varepsilon_v}
  =e_j^\top A_a^{-1}\E{R_aX_b\varepsilon_v}.
\]

\textbf{Claim 2 — Case expansion.}
It remains to expand $\E{R_{y,a}X_b\varepsilon_v}$ for each $y\in Y_a$ by substituting Lemma~\ref{lem:explicit_row} applied at node $a$.
For $y\notin\mathrm{htr}(a)$: Lemma~\ref{lem:explicit_row} gives $R_{y,a}=X_y\varepsilon_a$, so $\E{R_{y,a}X_b\varepsilon_v}=\E{X_y\varepsilon_aX_b\varepsilon_v}$, a raw fourth-order moment.
For $y\in\mathrm{htr}(a)$: Lemma~\ref{lem:explicit_row} gives $R_{y,a}=\varepsilon_y\varepsilon_a-\sum_{q'\in\mathrm{pa}(y)}\E{X_{q'}\varepsilon_a}\phi_{\beta_{q'y}}$, where the scalars $\E{X_{q'}\varepsilon_a}$ are non-random population moments (note: the subscript is $\varepsilon_a$, arising from applying Lemma~\ref{lem:explicit_row} at node $a$, not $\varepsilon_v$).
Factoring these scalars outside the expectation gives
\[
  \E{R_{y,a}X_b\varepsilon_v}
  =\E{\varepsilon_y\varepsilon_aX_b\varepsilon_v}
  -\sum_{q'\in\mathrm{pa}(y)}\E{X_{q'}\varepsilon_a}\,\E{\phi_{\beta_{q'y}}X_b\varepsilon_v},
\]
which introduces new cross-variances $\E{\phi_{\beta_{q'y}}X_b\varepsilon_v}$ at the earlier node $y\prec a$.

\textbf{Claim 3 — Termination and computability.}
Applying Claim 1 to each new cross-variance $\E{\phi_{\beta_{q'y}}X_b\varepsilon_v}$ reduces it to cross-variances $\E{R_{y',y}X_b\varepsilon_v}$ at node $y\prec a$; the internal case of Claim 2 then introduces cross-variances at nodes strictly below $y$ in HTC order, and so on.
Since $\prec$ is a strict partial order on the finite index set $V$, this descent terminates; at base nodes $u$ with $Y_u\cap\mathrm{htr}(u)=\emptyset$, all witnesses are external, and the case expansion gives $\E{R_{y,u}X_b\varepsilon_v}=\E{X_y\varepsilon_uX_b\varepsilon_v}$, a pure fourth-order moment of observables requiring no further recursion.
Since $\E{R_vR_v^\top}$ is expressed through Proposition~\ref{prop:recursive_variance} in terms of finitely many cross-variances, and each reduces in finitely many steps to such fourth-order moments, $\E{R_vR_v^\top}$ is computable in finitely many steps.
\end{proof}

\restatement{Proposition \ref{prop:inference}}{\TextPropInference}
\begin{proof}
The finite fourth moments $\varepsilon \in L^4_0$ ensure that every term entering $\hat{\mathcal{V}}_v$ is the empirical average of an $L^2_0$-integrable function; by the law of large numbers, $\hat{\mathcal{V}}_v\xrightarrow{p}\mathcal{V}_v$.

\textit{(i)} Since $\sqrt{n}(\hat\beta_v-\beta_v)\xrightarrow{d}\mathcal{N}(0,\mathcal{V}_v)$ by Proposition~\ref{prop:recursive_variance}, the continuous mapping theorem gives $n(\hat\beta_v-\beta_v)^\top\mathcal{V}_v^{-1}(\hat\beta_v-\beta_v)\xrightarrow{d}\chi^2_{|\mathrm{pa}(v)|}$.
By Slutsky's theorem and $\hat{\mathcal{V}}_v\xrightarrow{p}\mathcal{V}_v$, replacing $\mathcal{V}_v$ by $\hat{\mathcal{V}}_v$ does not change the limit, giving the confidence ellipsoid.

\textit{(ii)} By the continuous mapping theorem (projection to coordinate $j$), $\sqrt{n}(\hat\beta_{p_jv}-\beta_{p_jv})\xrightarrow{d}\mathcal{N}(0,\mathcal{V}_v[j,j])$.
Since $\hat{\mathcal{V}}_v[j,j]\xrightarrow{p}\mathcal{V}_v[j,j]>0$, applying Slutsky's theorem gives $\sqrt{n}(\hat\beta_{p_jv}-\beta_{p_jv})/\sqrt{\hat{\mathcal{V}}_v[j,j]}\xrightarrow{d}\mathcal{N}(0,1)$, yielding the stated interval.

\textit{(iii)} Under $H_0\colon C\beta_v=c$, the continuous mapping theorem gives $\sqrt{n}(C\hat\beta_v-c)\xrightarrow{d}\mathcal{N}(0,C\mathcal{V}_vC^\top)$.
By Slutsky's theorem and the continuous mapping theorem (quadratic form), replacing $C\mathcal{V}_vC^\top$ by $C\hat{\mathcal{V}}_vC^\top$ gives $W_n\xrightarrow{d}\chi^2_r$.
Since $\mathcal{V}_v\succ 0$ generically under Assumption~\ref{asm:regularity}.\ref{asm:1} and $C$ has full row rank $r$, $C\mathcal{V}_vC^\top$ is positive definite; by consistency of $\hat{\mathcal{V}}_v$, $C\hat{\mathcal{V}}_vC^\top$ is almost surely invertible for large $n$.
\end{proof}

\newpage
\section{Simulation Study: Setup and Experiments} 
\subsection{Calibration Example} \label{app:calibration}

The graph in Figure~\ref{fig:G1} has directed edges $1\to 2$, $2\to 4$, $1\to 5$, $3\to 5$ and bidirected edges $1\leftrightarrow 3$, $1\leftrightarrow 4$, $1\leftrightarrow 5$.
The HTC ordering is $2\prec 4\prec 5$.
The true parameter matrix has non-zero entries
\[
  \beta_{12}=0.8,\quad\beta_{24}=0.7,\quad\beta_{15}=0.8,\quad\beta_{35}=0.6.
\]
The error covariance matrix $\Omega=\E{\varepsilon\varepsilon^\top}$, consistent with the marginal independence constraints $\varepsilon_w\indep\varepsilon_{V\setminus(\{w\}\cup \, \mathrm{sib}(w))}$ for each $w\in V$, has non-zero off-diagonal entries $\Omega_{13}=0.3$, $\Omega_{14}=0.75$, $\Omega_{15}=0.2$ only; all other off-diagonal entries are zero.
Positive definiteness holds since $\Omega_{13}^2+\Omega_{14}^2+\Omega_{15}^2=0.6525<1$.

Two error distributions are studied.
The Gaussian distribution samples $\varepsilon\sim\mathcal{N}_5(0,\Omega)$ directly.
The non-Gaussian distribution constructs errors $\varepsilon$ from five independent centered Gamma distributions $\varepsilon_i'\sim(\mathrm{Gamma}(2,1)-2)/\!\sqrt{2}$ via the Cholesky factorization of the $\{1,3,4,5\}$ block of $\Omega$ (node $2$ is independent of all others).
Both distributions match the error covariance $\Omega$ exactly.

\textbf{Experiment to Figure~\ref{fig:calibration}.}
We created $100$ replications of datasets with sample sizes $n\in\{250,500,1000,2000,4000,8000\}$ for each of the two error distributions.
For each dataset, the recursive HTC estimator $\hat\beta_5=(\hat\beta_{15},\hat\beta_{35})^\top$ and the estimated asymptotic covariance matrix $\hat{\mathcal{V}}_5$ were computed as in Proposition~\ref{prop:recursive_variance}.
The four panels of the combined figure compare empirical and estimated quantities across replications:
\begin{enumerate}[label=(\roman*), itemsep=3pt, parsep=0pt, leftmargin=2em]
  \item $\mathrm{Var}(\hat\beta_{15})$ against $\hat{\mathcal{V}}_5[1,1]/n$,
  \item $\mathrm{Var}(\hat\beta_{35})$ against $\hat{\mathcal{V}}_5[2,2]/n$,
  \item $|\mathrm{Cov}(\hat\beta_{15},\hat\beta_{35})|$ against $|\hat{\mathcal{V}}_5[1,2]|/n$ (absolute value, since the covariance is negative; log scale),
  \item $\mathrm{Cor}(\hat\beta_{15},\hat\beta_{35})$ against $\hat{\mathcal{V}}_5[1,2]/\sqrt{\hat{\mathcal{V}}_5[1,1]\hat{\mathcal{V}}_5[2,2]}$ (linear scale).
\end{enumerate}
Estimated quantities are averaged over the $100$ replications.
The correlation panel (iv) is included to demonstrate that the estimators $\hat\beta_{15}$ and $\hat\beta_{35}$ are asymptotically negatively correlated, a consequence of the internal witness $4\in\mathrm{htr}(5)$ introducing a shared correction term in $R_{4,5}$.

\textbf{Experiment to Figure~\ref{fig:efficiency}.}
We created $100$ replications of datasets with sample sizes $n\in\{250,500,1000,2000,4000,8000\}$ for each of the two error distributions.
Three valid HTC witness sets for node $2$ are compared: $Y_2\in\bigl\{\{1\},\{3\},\{5\}\bigr\}$.
Since $\mathrm{htr}(2)=\{4\}$, every witness is external and $Z_y=X_y$.
For each dataset and witness set, the recursive HTC estimator $\hat\beta_{12}$ and the estimated variance $\hat{\mathcal{V}}_2[1,1]/n$ were computed as in Proposition~\ref{prop:recursive_variance}.
Each of the three panels plots $\mathrm{Var}(\hat\beta_{12})$ against the mean of $\hat{\mathcal{V}}_2[1,1]/n$ over the $100$ replications, on a shared log-scale $y$-axis.

\subsection{Inference Example} \label{app:summary}

The five-node graph of Figure~\ref{fig:summary_R} has directed edges $1\to 2$, $2\to 3$, $3\to 2$, $3\to 4$, $4\to 5$ and bidirected edges $1\leftrightarrow 2$, $1\leftrightarrow 4$, $1\leftrightarrow 5$, $3\leftrightarrow 4$, $4\leftrightarrow 5$.
The true parameter matrix has non-zero entries
\[
  \beta_{12}=0.8,\quad\beta_{23}=0.7,\quad\beta_{32}=0.4,\quad\beta_{34}=0.8,\quad\beta_{45}=0,
\]
with $\beta_{45}=0$ included to illustrate inference at a zero coefficient.
The directed cycle $2\to 3\to 2$ is admissible since $\det(I-\beta)\neq 0$ at the true parameter.
The HTC ordering is $3\prec 5\prec 2\prec 4$: node~$3$ is identified first via the external witness $Y_3=\{1\}$; node~$5$ follows with the internal witness $Y_5= \{3\}$; node~$2$ is identified using the internal witnesses $Y_2=\{3,5\}\subset\mathrm{htr}(2)$; node $4$ is identified last through the internal witness $Y_4 = \{2\}$.
The error covariance matrix $\Omega=\E{\varepsilon\varepsilon^\top}$, consistent with the marginal independence constraints $\varepsilon_w\indep\varepsilon_{V\setminus(\{w\}\cup\, \mathrm{sib}(w))}$ for each $w\in V$, has non-zero off-diagonal entries $\Omega_{12}=0.5$, $\Omega_{14}=0.25$, $\Omega_{15}=0.75$, $\Omega_{34}=0.5$, $\Omega_{45}=0.4$; all remaining off-diagonal entries are zero and all diagonal entries equal one.
Errors are drawn as $\varepsilon\sim\mathcal{N}_5(0,\Omega)$ and observables are computed via $X=(I-\beta)^{-\top}\varepsilon$ for a single dataset of size $n=1000$.

\section{Application: Fulton Fish Market}
\label{app:fish}

The dataset of \citet{Angrist2000} records 97 daily observations from December 1991 to May 1992, with three variables: log quantity-weighted average price (\texttt{lavgprc}), log total quantity sold (\texttt{ltotqty}), and wave height off the Long Island coast (\texttt{wave2} and \texttt{wave3}).
Following \citet{Angrist2000}, day-of-week effects (Monday--Thursday dummies, Friday as baseline) are partialed out from all three variables by regression, yielding mean-zero working residuals that remove the systematic day-of-week trading patterns and make the i.i.d.\ approximation underlying the asymptotic variance estimator more credible.

In the directed mixed graph, $\mathrm{pa}({demand})=\{{supply}\}$ and $\mathrm{sib}({demand})=\{{supply}\}$, so admissible HTC witnesses must lie outside $\{{supply},{demand}\}$.
Both \textit{wave2} and \textit{wave3} satisfy this condition for \textit{supply}\,$\to$\,\textit{demand} and are valid external witnesses, so the instruments are the raw variables $Z_{\textit{wave2}}=X_{\textit{wave2}}$ and $Z_{\textit{wave3}}=X_{\textit{wave3}}$, respectively, with first-stage correlations \[\mathrm{Cor}(\textit{wave2},\textit{supply})=0.4931 \quad \text{and} \quad \mathrm{Cor}(\textit{wave3},\textit{supply})=0.3798.\]
The 3-day average smooths the sharp day-before supply shock and dilutes the first-stage signal, widening the asymptotic standard error from $0.3827$ to $0.4246$.
The estimates $\hat\delta=-0.8410$ (\textit{wave2}) and $\hat\delta=-0.7611$ (\textit{wave3}) lie within one standard error of each other, confirming that both instruments identify the same structural parameter.

The supply slope (coefficient on \textit{demand}\,$\to$\,\textit{supply}) is not identified, as no demand-side instrument is available in this dataset.

\end{document}